\documentclass[twocolumn,aps,pra,superscriptaddress,longbibliography]{revtex4-2}

\usepackage{amsmath,amssymb,amsfonts}
\usepackage{amsthm}
\usepackage{color}
\usepackage[dvipsnames]{xcolor}

\usepackage{graphicx}% Include figure files
\usepackage{dcolumn}% Align table columns on decimal point
\usepackage{bm,bbold}% bold math
\usepackage{enumitem}
\usepackage{physics}

\usepackage{tikz}
\usetikzlibrary{quantikz2}

\usepackage{hyperref}

\newtheorem{theorem}{Theorem}
\newtheorem{lemma}{Lemma}

\newcommand{\Gswap}{\mathrm{SWAP}}
\newcommand{\cE}{\mathcal{E}}
\newcommand{\id}{\mathbb{1}}
\newcommand{\peff}{p_\mathrm{eff}}
\newcommand{\pL}{p_\mathrm{L}}

\begin{document}

\title{Fault-tolerant embedding of quantum circuits\\on hardware architectures via swap gates}

\author{Shao-Hen Chiew}
\email{shaohen@entropicalabs.com}
\affiliation{Entropica Labs, 186b Telok Ayer Street, Singapore 068632}

\author{Ezequiel Ignacio Rodríguez Chiacchio}
\affiliation{Entropica Labs, 186b Telok Ayer Street, Singapore 068632}

\author{Vishal Sharma}
\affiliation{Entropica Labs, 186b Telok Ayer Street, Singapore 068632}

\author{Jing Hao Chai}
\affiliation{Entropica Labs, 186b Telok Ayer Street, Singapore 068632}

\author{Hui Khoon Ng}
\email{huikhoon.ng@nus.edu.sg}
\affiliation{Entropica Labs, 186b Telok Ayer Street, Singapore 068632}
\affiliation{Yale-NUS College, Singapore}
\affiliation{Centre for Quantum Technologies, National University of Singapore{, Singapore}}

\date{\today}

\begin{abstract}
In near-term quantum computing devices, connectivity between qubits remain limited by architectural constraints. A computational circuit with given connectivity requirements necessary for multi-qubit gates have to be embedded within physical hardware with fixed connectivity. Long-distance gates have to be done by first routing the relevant qubits together. The simplest routing strategy involves the use of swap gates to swap the information carried by two unconnected qubits to connected ones. Ideal swap gates just permute the qubits; real swap gates, however, have the added possibilities of causing simultaneous errors on the qubits involved and spreading errors across the circuit. A general swap scheme thus changes the error-propagation properties of a circuit, including those necessary for fault-tolerant functioning of a circuit. Here, we present a simple strategy to design the swap scheme needed to embed an abstract circuit onto a physical hardware with constrained connectivity, in a manner that preserves the fault-tolerant properties of the abstract circuit. The embedded circuit will, of course, be noisier, compared to a native implementation of the abstract circuit, but we show in the examples of embedding surface codes on heavy-hexagonal and hexagonal lattices that the deterioration is not severe. This then offers a straightforward solution to implementing circuits with fault-tolerance properties on current hardware.

\end{abstract}

\maketitle

%%%%%%%%%%%%%%%%%%%%%%%%%%
%%%%%%%%%%%%%%%%%%%%%%%%%%
\section{Introduction}
Quantum computing offers the potential to solve problems intractable on classical computers. Full realisation of that potential demands the use of large-scale fault-tolerant quantum hardware \cite{knill1998resilient,preskill1998,aharonov2008,aliferis2005quantum} capable of reliable operations between arbitrary subsets of qubits in the computer. In near-term quantum devices, connectivity between qubits remain limited by architectural constraints, so operations between some subsets of qubits---usually spatially distant ones---cannot be directly implemented \cite{koch2007charge,bravyi2022future,dupont2023quantum,hetenyi2024tailoring}. Instead, one employs a procedure that routes the information carried by those qubits to ones that are connected, perform the necessary operation, and then route the information back. 

Routing procedures, integral to circuit synthesis \cite{wu2021mapping,kremer2024practical} and compilation \cite{Siraichi2018}, include elaborate entanglement-based ones that create a ``portal" between the original and target qubits \cite{choe2024fault}, physical shuttling of the qubits \cite{pino2021demonstration,Bluvstein_2023,stade2024abstract}, and simplistic strategies that use swap gates to effect the information transport along some ``wire" of intervening qubits \cite{finigan2018qubit,li2019tackling,childs2019circuit,nannicini2022optimal,ito2023algorithmic}. Ideal swap gates simply permute the qubits; real swap gates, however, can cause simultaneous errors on the qubits involved, and can spread errors across the circuit. A swap scheme thus generally changes the error-propagation properties of a circuit, and destroys associated fault-tolerance features designed into the original circuit.

Nevertheless, we show here that, by a simple restriction of the allowed swap-moves, we can look for swap-based routing schemes that embed arbitrary circuits onto physical hardware with constrained connectivity in a manner that preserves the fault-tolerant properties. Specifically, the error-patterns from faults in the embedded circuit resemble those of the original circuit; the former thus inherits any fault-tolerant properties---determined by the error-patterns that can be corrected---of the latter. The simplicity of our swap rules make them easy to incorporate into existing circuit-synthesis algorithms. Of course, there will be increased noise in the embedded circuit, due to the added swap gates. We show, however, that in the examples of surface code circuits embedded on heavy-hexagonal and hexagonal lattices, our swap schemes are fairly efficient so that the noise deterioration is mild. Our routing strategy thus offers an immediate route to implementing circuits with fault-tolerant properties on current hardware.

%%%%%%%%%%%%%%%%%%%%%%%%%%
%%%%%%%%%%%%%%%%%%%%%%%%%%
\section{Preliminaries}
Consider a quantum circuit that carries out a specific function---an algorithmic sub-routine, an error-correction procedure, etc. We call this the \emph{abstract circuit}, acting on \emph{abstract qubits}. An \emph{interaction graph} describes the connectivity needed (with a specified temporal order) for the multi-qubit gates in the circuit. We want to implement this circuit on a given hardware, with its \emph{physical qubits}, its set of primitive gates, and its connectivity, i.e., a \emph{device graph} that describes the possible multi-qubit gates between physical qubits.

The task here is to embed the abstract circuit onto the hardware, using swap gates to transport computational information spatially, so that the abstract circuit can be carried out as a sequence of physical operations. Each abstract qubit is assigned to exactly one physical qubit at any point in time. We refer to these physical qubits as the \emph{computational qubits}; the leftover ones are called \emph{routing qubits}. The separation of physical qubits into computational and routing qubits changes as the computation proceeds, as the information carried by the computational qubits is routed via swap gates. The actual circuit carried out on the hardware, including the swap gates and routing qubits, is referred to as the \emph{physical circuit}. \emph{Computational operations} refer to those in the physical circuit that correspond to operations in the abstract circuit. We assume the abstract circuit is already compiled into a circuit that employs only the types of primitive gates available on the hardware, but it may contain multi-qubit gates on abstract qubits assigned to computational qubits not connected on the device graph; such gates are referred to as \emph{long-distance gates}.

In the physical circuit, a long-distance gate from the abstract circuit is realized as a sequence of swaps to bring the computational qubits to connected positions on the device graph so that the multi-qubit gate can be implemented. We can abstract this back into an effective multi-qubit gate on the abstract qubits by tracing away the routing qubits. In the ideal case, this effective gate is exactly equal to the abstract gate; in reality, the swap gates and routing qubits can introduce errors so that the effective gate has a different noise description than a native abstract gate. We refer to the abstraction of the physical circuit back into just action on the abstract qubits, together with the resulting noise, as the \emph{embedded circuit}.

Central to our discussion is the swap gate, a two-qubit linear operation that acts as
\begin{equation}
\Gswap{\left(\ket{\phi}_1\ket{\psi}_2\right)}=\ket{\psi}_1\ket{\phi}_2,
\end{equation}
for arbitrary states $\ket{\phi}$ and $\ket{\psi}$ on the two participating qubits; this extends to arbitrary (non-product) two-qubit states by linearity. An ideal swap gate interchanges the states carried by the two qubits. Any errors in the input state emerge unchanged in the output state, following the respective qubit that brought in the error. The swap can thus be thought of simply as a weaving of the quantum wires, with the state and any errors continuing to be carried by the same wire. A faulty swap gate, however, can introduce new errors. We write the faulty swap gate as the map $\cE\circ\Gswap$, where $\Gswap$ is the ideal gate, and $\cE$ describes the noise channel. $\cE$ can cause errors on one or both qubits involved; it can also spread errors such that incoming errors on one qubit can contaminate the other qubit.

We employ the typical noise model used in discussing fault tolerance of quantum circuits. When a circuit location \cite{gottesman2022opportunities}---a space-time location in the circuit at which a state preparation, a gate (including the identity for a waiting time-step), or a measurement is carried out---functions imperfectly, we say that a \emph{fault} has occurred there. 
A fault on a multi-qubit gate can cause simultaneous errors on one or more of the qubits involved; those errors are allowed to be arbitrary operators (we expect our approach to generalize to biased-noise situations \cite{AP2008,Tuckett2020}, but we leave this for future work). For concreteness, we assume a stochastic model for faults: Faults occur independently on individual circuit locations, each with probability $p$. This can be generalized to a local model, where the independence property is lifted to the constraint that $\ell$ faults occur with probability no larger than $O(p^\ell)$. A \emph{fault-path} refers to a specification of circuit locations with faults. Its weight is the number of faults, so a fault-path of weight $\ell$ occurs with probability $p^\ell$. Each fault-path results in a set of \emph{error-patterns} on the qubits, with each error-pattern specifying the (normalized) error operators applied to the qubits involved in the fault-path. The weight of an error-pattern is the weight of the associated fault-path.

We use this noise model for both physical and abstract circuits. In the latter case, we imagine directly implementing the abstract circuit on the hardware without any SWAPs, disregarding connectivity constraints. This gives us a noisy abstract circuit, against which the performance of our embedded circuit is benchmarked.

%%%%%%%%%%%%%%%%%%%%%%%%%%
%%%%%%%%%%%%%%%%%%%%%%%%%%
\section{Error-pattern-preserving routing}
We want a routing schedule, using swap gates, that implements the interaction graph on the given device graph, with fault-tolerance properties as specified below. The abstract circuit is embedded as a sequence of \emph{computational layers}, during which computational operations occur. Each adjacent pair of computational layers is separated by a sequence of swap gates that route the computational qubits around so that long-distance gates in the next computational layer can be implemented.

We allow for two types of swap gates: (1) Type-1 SWAP-- a swap gate between a computational qubit and a routing qubit; (2) Type-2 SWAP-- a swap gate between two computational qubits that participate in the same operation (or circuit location) in the computational layer either immediately before or after the swap sequence. We refer to that computational operation as one \emph{associated with} the type-2 SWAP. Our routing schedules use only types-1 and 2 SWAPs. We do not use swaps between two routing qubits---these add no useful moves to routing the computational information (though they can change the noise properties of the embedded circuit).

Our main result is to show that an error-pattern in the embedded circuit arising from faulty SWAPs can be mimicked by having an equivalent- or lower-weight fault-path on the computational operations only. The set of error-patterns for the embedded circuit then coincides with that for the noisy abstract circuit; we refer to this as the \emph{error-pattern-preserving} (EPP) property of a routing schedule. 

To show the EPP property, we first consider routing schedules comprising only type-1 SWAPs. Faulty type-1 SWAPs can introduce spatio-temporally correlated errors in the embedded circuit absent in the abstract circuit. This can occur when multiple computational qubits are swapped with the same routing qubit. Nevertheless, the following lemma shows that such correlated errors occur with a probability resembling that of independent errors.

\begin{lemma}\label{lemma}
Consider a weight-$k$ error-pattern in the noisy abstract circuit. If only type-1 SWAPs are allowed, this error-pattern spreads to weight $k+\ell$ in the embedded circuit only if at least $\ell$ type-1 SWAPs are faulty, which occurs with probability $O(p^\ell)$.
\end{lemma}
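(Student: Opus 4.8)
The plan is to show that, when only type-1 SWAPs are used, the routing is nothing but a \emph{rewiring} of the circuit: it transports errors along quantum wires without ever mixing the wires of two distinct abstract qubits, \emph{except} precisely at faulty SWAPs, each of which can enlarge the error-pattern by at most one abstract qubit. The lemma then follows by counting and a binomial tail estimate.

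\textbf{Step 1 (the ideal baseline).} First I would make the ``quantum wire'' picture from the Preliminaries precise: assign to each abstract qubit the time-ordered sequence of physical qubits it occupies, and track each routing qubit similarly. An ideal type-1 SWAP merely exchanges a computational wire with a routing wire; since $\Gswap$ is a permutation, any error on the input of one qubit emerges unchanged on the output of the other, and---crucially---an ideal SWAP never places an error on a wire that did not already carry one, never splits an error across two wires, and never merges the errors of two wires. Hence if every SWAP were ideal, a weight-$k$ error-pattern of the abstract circuit would appear in the embedded circuit as the ``same'' pattern, merely relocated onto the (time-dependent) physical positions of the $k$ affected abstract qubits, still of weight $k$. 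This is the baseline against which spreading is measured.

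\textbf{Step 2 (charging the spread to faulty SWAPs).} Next I would process the physical circuit in temporal order, maintaining the invariant that after the first $j$ faulty type-1 SWAPs have been passed, the error-pattern carried by the abstract (computational) wires has weight at most $k+j$, while any extra errors live only on routing wires as ``parked'' errors, uncounted. Passing an ideal SWAP changes nothing, by Step 1. Passing a faulty SWAP $\cE\circ\Gswap$: by definition this gate involves exactly one computational qubit and one routing qubit, so the noise channel $\cE$ touches at most one abstract wire; it can therefore (i) create or modify an error on that single abstract qubit, contributing at most $+1$ to the weight, and (ii) deposit an error on the routing qubit---which, by Step 1, can reach an abstract wire later only through \emph{another} faulty SWAP, since the ideal SWAP that next reuses that physical qubit for computation carries the parked junk away from the incoming abstract wire. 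The subtlety I expect to be the main obstacle is verifying that a single faulty SWAP is never charged twice: when it simultaneously errs its computational qubit and activates a previously parked error, both effects are confined to the unique abstract qubit participating in that SWAP (and to the routing qubit, which only accumulates a fresh parked error), so the weight still rises by at most one. Making this rigorous requires pinning down ``weight'' as the minimal number of abstract-circuit faults reproducing the pattern and checking that parked errors are never obtained for free.

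\textbf{Step 3 (conclusion and probability).} Iterating Step 2 over the faulty type-1 SWAPs along the fault-path shows the embedded error-pattern has weight at most $k+\ell$ whenever $\ell$ SWAPs are faulty; contrapositively, reaching weight $k+\ell$ forces at least $\ell$ faulty type-1 SWAPs. Finally, since faults strike the finitely many SWAP locations independently with probability $p$, the event that $\ell$ or more of them fail has probability $\binom{M}{\ell}p^{\ell}\bigl(1+O(p)\bigr)=O(p^{\ell})$, with $M$ the number of type-1 SWAP locations in the embedded circuit, which gives the stated bound. (In the local noise model the same count gives probability $O(p^{\ell})$ directly.)
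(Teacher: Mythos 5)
Your proposal is correct and follows essentially the same argument as the paper's proof: ideal type-1 SWAPs merely relocate errors along wires, each faulty SWAP adds at most one error to the single computational qubit it touches, and an error parked on a routing qubit can reach a computational qubit only through another faulty SWAP, so $\ell$ extra errors require at least $\ell$ faulty SWAPs, an event of probability $O(p^\ell)$. Your Steps 1--3 simply make the paper's reasoning more explicit (the wire-tracking invariant, the no-double-charging check, and the binomial tail bound), without changing the route.
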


\begin{proof}
If all SWAPs in the routing schedule are ideal, $\ell=0$, and the weight-$k$ error-pattern is inherited without change by the embedded circuit. A faulty SWAP anywhere, occurring with probability $p$, can introduce errors on the participating computational and routing qubits. The error on the computational qubit adds an error in the error pattern in the embedded circuit. The error on the routing qubit---either generated by the faulty swap gate, or spread by the faulty swap gate from an existing error in the computational qubit---can spread to another computational qubit it is swapped with only if that subsequent SWAP is also faulty. Hence, each additional error on a computational qubit arises only from a faulty SWAP; $\ell$ additional errors require at least $\ell$ faulty SWAPs, which occur with probability $O(p^\ell)$.
\end{proof}
\noindent 
Lemma \ref{lemma} tells us that we can view any error-pattern modified by faulty type-1 SWAPs as if they were generated by additional single-qubit faults, one on each computational qubit involved in a faulty SWAP, occurring with probability $p$. Since we assume arbitrary errors for every fault, we need not worry about how the specific error operator on a computational qubit gets modified in subsequent gates; the added fault on it can simply be absorbed into a fault on the operation it participates in in the nearest computational layer. 

We thus see how an error-pattern in the embedded circuit, in part due to faulty (type-1) SWAPs, can be mimicked by a fault-path on the noisy abstract circuit. For the fault-path in the physical circuit that gave the error-pattern in the embedded circuit, faults on computational operations are assigned to the corresponding operations in the abstract circuit. Each fault on a type-1 SWAP can be viewed as a single-qubit fault on the computational qubit involved. This single-qubit fault is allocated to the operation involving that qubit in the nearest computational layer; the corresponding abstract operation is assigned a fault. Faults allocated to multiple qubits that participate in the same computational operation can be combined into a single fault on that operation, and inherited by the corresponding abstract operation; if that operation is already faulty, any additional faults are absorbed into the existing fault. We thus end up with an abstract-circuit fault-path of equal or lower weight than that of the original physical fault-path. Nevertheless, the error-pattern in the embedded circuit can be found as one of the error patterns arising from this mocked-up fault-path in the abstract circuit. 

It is straightforward to extend this to routing schedules with both types-1 and 2 SWAPs, by first viewing the type-2 SWAPs as if they are computational gates. In this case, we first absorb the faults from the type-1 SWAPs into either the nearest type-2 SWAP or the computational layer in the manner described above. We are then left only with faulty type-2 SWAPs immediately adjacent to computational layers, after removing the no-fault type-1 SWAPs (which act now as ideal permutations). The fault on each type-2 SWAP can then be absorbed into its associated computational operation, and the corresponding abstract operation inherits that fault. 

The above construction gives us the following Theorem about the EPP nature of our routing schedules:
\begin{theorem}[EPP routing schedules]\label{theorem}
Consider an abstract circuit that employs a routing schedule with only types-1 and 2 SWAPs. For every error-pattern that arises in the embedded circuit from a fault-path in the physical circuit, there exists an equal- or lower-weight fault-path in the noisy abstract circuit that can give the same error-pattern. 
\end{theorem}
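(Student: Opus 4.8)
The plan is to prove Theorem~\ref{theorem} by reduction to Lemma~\ref{lemma}, handling the two SWAP types in sequence exactly as foreshadowed in the discussion above. The whole argument is structural bookkeeping: given a fault-path in the physical circuit, I will exhibit a procedure that produces a fault-path on the abstract circuit of no greater weight, and then argue that the error-pattern on the computational qubits induced by the original physical fault-path is among those the constructed abstract fault-path can produce.

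First I would fix notation: let $F$ be a fault-path in the physical circuit, decomposing its faults into (i) faults on computational operations, (ii) faults on type-1 SWAPs, and (iii) faults on type-2 SWAPs. I would process the type-1 SWAP faults first. Treating each type-2 SWAP provisionally as a computational gate (so that ``computational layers'' are extended to include the type-2 SWAPs immediately adjacent to them), I apply the absorption recipe justified by Lemma~\ref{lemma}: each faulty type-1 SWAP contributes, on each of its two qubits, an arbitrary error that can only spread further if a \emph{subsequent} SWAP on that wire is also faulty; hence each propagated single-qubit error can be charged to a distinct faulty type-1 SWAP. I reassign each such single-qubit fault to the nearest computational operation (or type-2 SWAP) on that wire, coalescing multiple reassigned faults landing on the same operation into one, and absorbing any that land on an already-faulty operation into the existing fault. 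Because arbitrary errors are permitted at every fault location, the coalesced single fault can realize whatever composite error the several absorbed faults produced, and the count of faults cannot increase in this step.

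Next I would clear the type-2 SWAPs. After the previous step, the only remaining SWAP faults are on type-2 SWAPs sitting immediately before or after a computational layer; all type-1 SWAPs are now fault-free and act as honest permutations of the wire labels, so they can be composed away without changing any error-pattern. For each faulty type-2 SWAP I invoke its \emph{associated} computational operation (guaranteed to exist by the definition of type-2 SWAPs: the two swapped computational qubits co-participate in that operation) and absorb the SWAP fault into it, again merging into an existing fault if present. Since an ideal SWAP merely permutes the two computational-qubit labels, and a faulty one is $\cE\circ\Gswap$ with $\cE$ arbitrary, the effect on the pair of computational qubits is an arbitrary two-qubit error followed by a relabeling; this is exactly reproducible by an arbitrary fault on the associated abstract operation acting on those same two abstract qubits (with the harmless relabeling tracked through the wiring). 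Finally I translate: each faulty computational operation in the resulting physical fault-path maps to a fault on the corresponding operation of the abstract circuit, yielding an abstract fault-path $F'$ with $\mathrm{weight}(F') \le \mathrm{weight}(F)$, and by construction the embedded-circuit error-pattern from $F$ is one of the error-patterns generated by $F'$.

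The main obstacle I anticipate is making the ``nearest computational layer / associated operation'' assignment well-defined and genuinely weight-nonincreasing in full generality --- in particular confirming that every propagated error from a faulty type-1 SWAP really does terminate at a computational operation or a type-2 SWAP rather than drifting off the end of the circuit or onto a qubit with no associated operation, and that the merging of several charges onto one operation never forces a weight increase. This is precisely where the restriction to types-1 and 2 SWAPs (no routing-routing swaps, and type-2 SWAPs always tied to a co-participated operation) does the real work, so I would be careful to point out exactly where each restriction is used. The remaining steps are routine given Lemma~\ref{lemma} and the arbitrary-error assumption; I would keep the exposition at the level of the error-pattern/fault-path correspondence rather than tracking explicit operators.
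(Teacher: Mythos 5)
Your proposal is correct and follows essentially the same route as the paper's argument: invoke Lemma~\ref{lemma} to charge each extra error to a faulty type-1 SWAP, absorb those single-qubit faults into the nearest computational operation (treating type-2 SWAPs provisionally as computational gates), then absorb each remaining type-2 SWAP fault into its associated computational operation, yielding an abstract fault-path of equal or lower weight that reproduces the error-pattern. The well-definedness concerns you flag are handled in the paper exactly as you suggest, via the arbitrary-error assumption and merging of co-located faults.
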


Theorem \ref{theorem} allows us to argue that our routing schedules preserve the fault-tolerance properties of an abstract circuit. A quantum circuit, built upon an error-correcting code, is fault tolerant if it does not convert correctable errors into uncorrectable ones, even in the presence of faults in the circuit. Concrete conditions can be enforced to assure proper behavior of a circuit when no more than $t$ faults occur in the circuit, for a scheme built upon a code that corrects errors on $t$ or fewer qubits (see, for example, Refs.~\cite{aliferis2005quantum,gottesman2013fault,delfosse2020short,tansuwannont2023adaptive}). All such conditions involve checks on the error-patterns in the circuit arising from fault-paths of weights $t$ or smaller. Our EPP routing schedule preserves the set of such error-patterns according to Theorem \ref{theorem}, and hence the embedded circuit automatically inherits the fault-tolerance properties of the abstract circuit. Put differently, if the abstract circuit is such that $t$ or fewer faults anywhere lead only to correctable error-patterns, the physical circuit with $t$ or fewer faults also results in an embedded circuit with only correctable error-patterns.

Of course, the embedded circuit will be noisier than the noisy abstract circuit---the physical circuit, with the added swap gates, has many more ways in which faults can occur. The number of ways in which more than $t$ faults occur in the physical circuit will hence generally be larger than that for the abstract circuit. These $>\!t$-fault situations give the failure modes of the abstract or embedded circuit, and determine the resulting logical error rate. The noisier embedded circuit will thus have a larger logical error rate, which translates into a more stringent fault-tolerance noise threshold below which errors can be removed by noisy error correction. We illustrate these points in our surface-code examples below.

We have thus reduced the problem of fault-tolerantly embedding an abstract circuit on physical hardware to a search of an EPP routing schedule. Embedding an abstract circuit onto a given hardware topology falls under the general problem of compilation of quantum circuits using swap gates. This is well studied in the literature \cite{finigan2018qubit,li2019tackling,childs2019circuit,nannicini2022optimal,ito2023algorithmic}, but current routing algorithms are insensitive to fault-tolerance properties. By adding the EPP constraints of limiting to types-1 and 2 SWAPS, we are able to easily adapt existing routing search algorithms; see App.~\ref{app:Routing}.

%%%%%%%%%%%%%%%%%%%%%%%%%%
%%%%%%%%%%%%%%%%%%%%%%%%%%
\section{Surface code examples}

As an example, we apply our results to embed the error correction circuit for the rotated planar surface code \cite{fowler2012surface,tomita2014low}---a popular approach to error correction---on hardware with heavy-hexagonal (as used in IBM Quantum platforms) and hexagonal lattices as their device graphs. The rotated planar surface code has data qubits on the vertices of a square lattice, and $X$- and $Z$-type ancillary qubits at the centre of each plaquette in a checkerboard arrangement. The $X$ and $Z$ ancillas are used in the syndrome extraction (SE) circuits shown in Fig.~\ref{fig:SC}(b). Syndrome extraction happens across the entire surface code lattice simultaneously, with the $X$ and $Z$ SE circuits running in parallel and the CNOTs connecting each ancilla with data qubits in a specified order \cite{tomita2014low}. Each SE round thus comprises an ancilla-preparation layer, 4 CNOT layers, and finally a measurement layer. Our abstract circuit here implements a full error-correction cycle for a distance-$d$ surface code, comprising $d$ rounds of syndrome extraction using the SE circuits, followed by syndrome decoding to infer and (attempt to) correct the errors. The CNOTs in the SE circuits give the interaction graph a (rotated) square-lattice structure, a mismatch with the heavy-hexagonal and hexagonal device graphs. 

\begin{figure}
\includegraphics[trim=0mm 0mm 0mm 0mm, clip, width=\columnwidth]{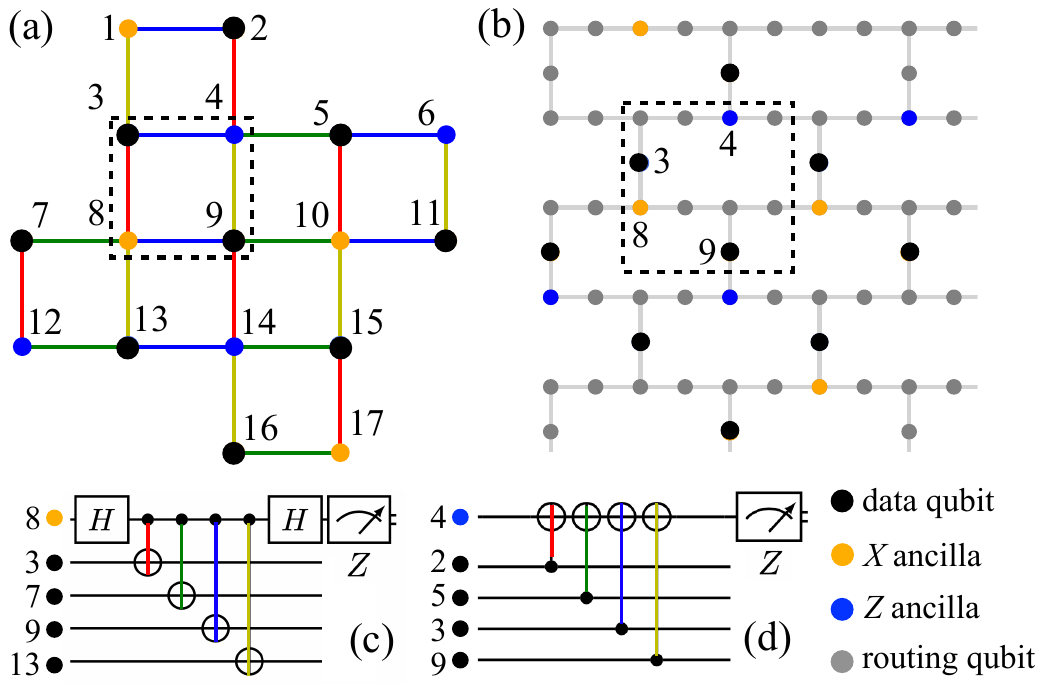}
\caption{\label{fig:SC} The surface code examples. (a) The interaction graph of the rotated planar surface code, shown for the distance-3 case, with each edge colored according to the temporal layer in which a CNOT occurs between the data and ancilla qubits. The unit cell is enclosed in the dotted box. (b) The initial placement of the abstract surface-code lattice onto the heavy-hexagonal device graph. The corresponding unit cell is enclosed in the dotted box. The analogous placement for the hexagonal case can be found in App.~\ref{app:HexEx}. (c) and (d) show the $X$ and $Z$ syndrome extraction circuits, respectively, with the qubit numbering and coloring of the CNOT gates matching those in (a).
}
\end{figure}

The periodic natures of the interaction and device graphs allow us to reduce our embedding problem to consider only unit cells of the lattices. This same unit cell works for all distances of the surface code, as the code grows only in area while retaining the local structures; the same routing schedule thus works for all distances. We choose an initial placement of the abstract unit cell onto the device-graph unit cell, i.e., assign each abstract qubit to a physical qubit, dividing the physical qubits into an initial arrangement of computational and routing qubits; see Fig.~\ref{fig:SC}(c). With these initial layouts, we search for EPP routing schedules as described in App.~\ref{app:SCRouting2}, looking for ones with minimal depth. The surface-code structure allows for a simple parameterization of routing schedules, making for an easy search. 

For the heavy-hexagonal situation, the minimal-depth solution found by our algorithm
comprises a total of five SWAP layers per SE round, with each physical qubit (excluding the altogether unused ones) participating in 3.5 SWAPs on average (see App.~\ref{app:HeavyEx}). At the end of one SE round, the computational qubits are displaced one unit cell diagonally from the initial layout positions; in the next round, the directions and ordering of the routing moves are reversed---allowed by the symmetry of the lattices---so that the computational qubits return to the initial positions after yet another cycle, to prevent walk-off of the embedded circuit. The minimal-depth EPP schedule for the hexagonal situation is much simpler: One needs only a single type-2 SWAP layer between the final two CNOT layers for each SE round. 

To illustrate the EPP nature of our routing schedules, we perform numerical simulations (using Stim \cite{gidney2021stim}) of the physical circuits for different surface-code distances $d$ under circuit-level depolarizing noise characterized by error probability $p$. We demonstrate the appearance of a fault-tolerance threshold in a manner closely resembling that of the abstract circuit, apart from a noise rescaling explained below. To imitate the standard situation where one type of two-qubit gate is available, each SWAP is composed from three consecutive CNOT gates, needed also in the SE circuits. Pauli errors are inserted at all layers of the circuit: initialization to $\ket 0$ and $Z$-measurement suffer bit-flip $X$ with probability $p$; non-identity single-qubit gates have Pauli $X$, $Y$, or $Z$ errors, each with probability $p/3$; CNOT gates experience non-identity two-qubit errors $\sigma_i\otimes \sigma_j$, for $\sigma_i,\sigma_j=\id, X,Y,Z$, each with probability $p/15$. Identity gates are treated as ideal. This circuit-level noise is applied to the physical circuits with our EPP routing for different $d$, and we benchmark the logical error probability $\pL$ at the end of the error-correction cycle against that for an abstract circuit with the same circuit-level noise. For syndrome decoding, we employ PyMatching \cite{higgott2022pymatching,higgott2023sparse} based on the minimum-weight-perfect-matching (MWPM) algorithm, which remains applicable here (see App.~\ref{sec:mwpm_correction}).

The results for the heavy-hexagonal case are given in Fig.~\ref{fig:SCresults}, with $\pL$ plotted against the effective error probability $\peff$, equal to $p$ for the abstract circuit, but rescaled to the best-fit value of $\peff=3.63p$ for the embedded circuit. This $\peff$ should be thought of as the effective error probability for the embedded circuit, absorbing the noise introduced by the extra swap gates into an effective noise on the abstract qubits. The horizontal rescaling is done to bring the $\pL$ curves for the the abstract and embedded circuits on top of each other for easier comparison. We see that, for each $d$, the gradients of the curves for the abstract and embedded circuits match very well, demonstrating the EPP nature of our routing schedule: The gradient of the $\pL$ versus $\peff$ (or $p$) line tells us the number of faults the circuit is tolerant to, and a preservation of the fault-tolerant nature of a circuit demands precisely an unchanged gradient. Consequently, the lines for different $d$ also intersect at the same point on the $\pL$ versus $\peff$ graph, for both abstract and embedded circuits. 

We thus conclude that our EPP routing schedule preserves the fault-tolerant nature of the surface code error correction circuit, with a worsening of the threshold by a factor of 3.63; this factor will be smaller if each SWAP is a single high-fidelity operation, rather than the three CNOTs assumed here.
An analogous simulation of the hexagonal situation gives similar results, with $\peff=1.25p$ in that case; see App.~\ref{app:HexEx}. A straightforward derivation (see App.~\ref{sec:approx_noise_model_surf}) yields an approximate expression for $\peff$, giving $\peff\simeq 3.1p$ and $1.25p$, respectively, for the heavy-hexagonal and hexagonal cases. Deviations from the best-fit values arise from taking a spatio-temporal average of the routing schedule parameters, and from ignoring the asymmetry in the Pauli errors introduced by type-1 SWAPs.

Compared with recent proposals for implementing the surface code on hexagonal \cite{mcewen2023relaxing} and  heavy-hexagonal lattices \cite{benito2024comparative,hetenyi2024creating}, we see a threshold deterioration of the same order of magnitude with the same spatial footprint. Notably though, those schemes relied on a modification of the standard surface code SE circuits to have a \emph{hexagonal} interaction graph before the embedding; our approach requires no such modification and is applicable to arbitrary circuits and device graphs for which alternative interaction graphs may not be possible.

\begin{figure}
\includegraphics[trim=0mm 0mm 15mm 12mm, clip, width=\columnwidth]{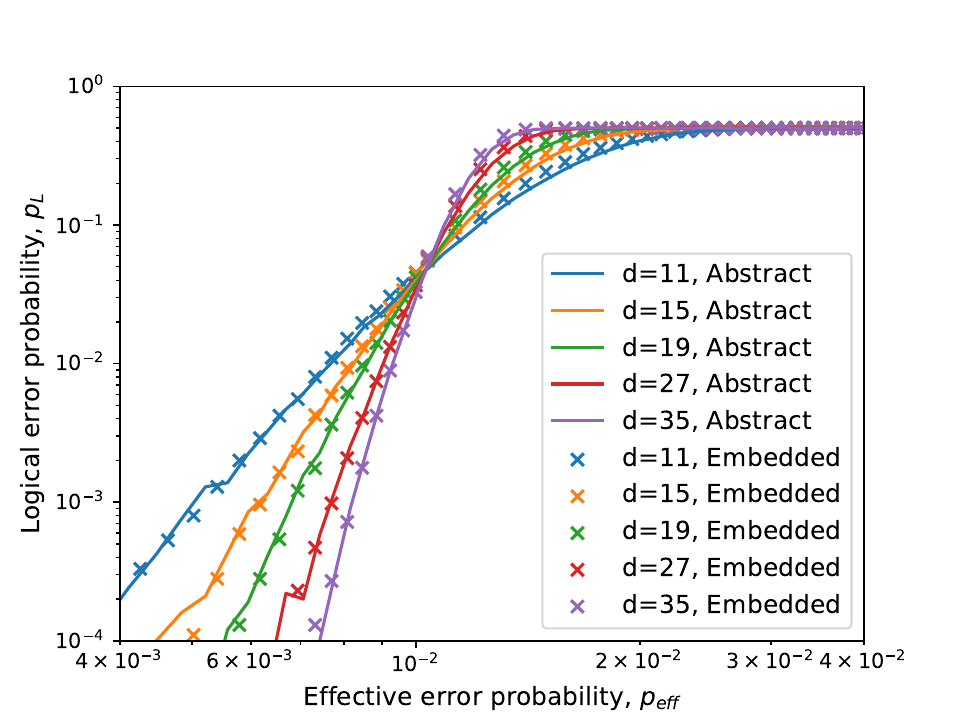}
\caption{\label{fig:SCresults} The logical error probability $\pL$ versus the effective error probability $\peff$ for the example of the surface code embedded onto the heavy-hexagonal lattice under depolarizing noise with error probability $p$. For the benchmark noisy abstract circuit, $\peff=p$, the Pauli-error probability; for the embedded circuit, $\peff=3.63p$, obtained by fitting both curves for $d=35$ below the threshold value.
}
\end{figure}

%%%%%%%%%%%%%%%%%%%%%%%%%%
%%%%%%%%%%%%%%%%%%%%%%%%%%
\section{Discussion and Outlook}
We have shown that a restriction to types-1 and 2 SWAPs results in EPP routing schedules, thus preserving fault-tolerance properties of an abstract circuit. Noisy type-1 SWAPs can in fact be thought of as interactions between the computational qubits with a possibly spatially and temporally correlated environment formed by the routing qubits; the routing qubits move around as SWAPs occur, but they never participate in the computational operations. From this perspective, that type-1 SWAPs preserve fault-tolerance properties becomes obvious: Standard fault-tolerance requirements allow for arbitrary, though limited (by the code correction capacity), interactions with an environment \cite{terhal2005fault,aliferis2005quantum,aharonov2006fault,ng2009fault}. Type-2 SWAPs imitate the errors introduced by an existing interaction in the abstract circuit. These two observations give the intuition behind the fault-tolerance-preserving nature of our routing approach.

The restriction to types-1 and 2 SWAPs is easy to accommodate within existing routing algorithms. In our surface-code examples, the problem symmetries further enabled a concise search for the optimal routing schedule. Relaxing the restriction to types-1 and 2 SWAPs might be possible for surface codes, by including the expanded set of error-patterns within the decoding algorithm, e.g., adding hypergraph edges in the syndrome lattice \cite{higgott2023improved,chen2022calibrated}, but we leave this as a possibility for future investigations.

Our surface-code examples show a mild noise increase in the embedded circuits from the added SWAPs. The swap scheme can possibly be further optimized for lower depth---at least for the heavy-hexagonal situation---by choosing a different initial embedding. Our results do not guarantee, for the surface-code examples or more general situations, the existence of EPP routing schedules for given interaction and device graphs. One might expect a sparser embedding with more routing qubits to yield EPP schedules more readily, but at the cost of a larger spatial footprint.

We note that our current routing search algorithm, for simplicity, constrains SWAPs and computational gates into separate circuit layers. For our heavy-hexagonal example, we found an EPP routing schedule with 6 SWAP layers per SE round; upon inspection, however, the circuit can be compressed---by allowing swap-gate CNOTs and computational gates to occupy the same layers---into a circuit depth equal to that of the minimal-depth solution discussed above. Interestingly, this 6-SWAP-layer solution introduces only temporally, but not spatially, correlated errors, a property that may be helpful in limiting the spread of errors. A more flexible circuit compilation program that allows for simultaneous circuit compression and EPP routing search can potentially generate more solutions, including ones with desirable features.

All in all, our work relaxes hardware connectivity requirements for near-term fault-tolerant tasks. Its general applicability provides a concrete route to implementing arbitrary circuits on hardware in a fault-tolerant manner, without having to redesign the abstract circuit. Our work also motivates the development of native, high-quality swap gates, such that the information transport itself gives minimal increase in the overall noise. Perhaps, we can eventually achieve with swap gates, the high-fidelity information transport possible today in neutral-atom experiments via optical tweezers \cite{Bluvstein_2023}.

%%%%%%%%%%%%%%%%%%%%
\begin{acknowledgments}
HK Ng acknowledges partial support from the  National Research Foundation, Singapore and A*STAR under its CQT Bridging Grant.
\end{acknowledgments}

\bibliography{apssamp}

%apsrev4-2.bst 2019-01-14 (MD) hand-edited version of apsrev4-1.bst
%Control: key (0)
%Control: author (8) initials jnrlst
%Control: editor formatted (1) identically to author
%Control: production of article title (0) allowed
%Control: page (0) single
%Control: year (1) truncated
%Control: production of eprint (0) enabled
\begin{thebibliography}{42}%
\makeatletter
\providecommand \@ifxundefined [1]{%
 \@ifx{#1\undefined}
}%
\providecommand \@ifnum [1]{%
 \ifnum #1\expandafter \@firstoftwo
 \else \expandafter \@secondoftwo
 \fi
}%
\providecommand \@ifx [1]{%
 \ifx #1\expandafter \@firstoftwo
 \else \expandafter \@secondoftwo
 \fi
}%
\providecommand \natexlab [1]{#1}%
\providecommand \enquote  [1]{``#1''}%
\providecommand \bibnamefont  [1]{#1}%
\providecommand \bibfnamefont [1]{#1}%
\providecommand \citenamefont [1]{#1}%
\providecommand \href@noop [0]{\@secondoftwo}%
\providecommand \href [0]{\begingroup \@sanitize@url \@href}%
\providecommand \@href[1]{\@@startlink{#1}\@@href}%
\providecommand \@@href[1]{\endgroup#1\@@endlink}%
\providecommand \@sanitize@url [0]{\catcode `\\12\catcode `\$12\catcode `\&12\catcode `\#12\catcode `\^12\catcode `\_12\catcode `\%12\relax}%
\providecommand \@@startlink[1]{}%
\providecommand \@@endlink[0]{}%
\providecommand \url  [0]{\begingroup\@sanitize@url \@url }%
\providecommand \@url [1]{\endgroup\@href {#1}{\urlprefix }}%
\providecommand \urlprefix  [0]{URL }%
\providecommand \Eprint [0]{\href }%
\providecommand \doibase [0]{https://doi.org/}%
\providecommand \selectlanguage [0]{\@gobble}%
\providecommand \bibinfo  [0]{\@secondoftwo}%
\providecommand \bibfield  [0]{\@secondoftwo}%
\providecommand \translation [1]{[#1]}%
\providecommand \BibitemOpen [0]{}%
\providecommand \bibitemStop [0]{}%
\providecommand \bibitemNoStop [0]{.\EOS\space}%
\providecommand \EOS [0]{\spacefactor3000\relax}%
\providecommand \BibitemShut  [1]{\csname bibitem#1\endcsname}%
\let\auto@bib@innerbib\@empty
%</preamble>
\bibitem [{\citenamefont {Knill}\ \emph {et~al.}(1998)\citenamefont {Knill}, \citenamefont {Laflamme},\ and\ \citenamefont {Zurek}}]{knill1998resilient}%
  \BibitemOpen
  \bibfield  {author} {\bibinfo {author} {\bibfnamefont {E.}~\bibnamefont {Knill}}, \bibinfo {author} {\bibfnamefont {R.}~\bibnamefont {Laflamme}},\ and\ \bibinfo {author} {\bibfnamefont {W.~H.}\ \bibnamefont {Zurek}},\ }\bibfield  {title} {\bibinfo {title} {Resilient quantum computation},\ }\href@noop {} {\bibfield  {journal} {\bibinfo  {journal} {Science}\ }\textbf {\bibinfo {volume} {279}},\ \bibinfo {pages} {342} (\bibinfo {year} {1998})}\BibitemShut {NoStop}%
\bibitem [{\citenamefont {Preskill}(1998)}]{preskill1998}%
  \BibitemOpen
  \bibfield  {author} {\bibinfo {author} {\bibfnamefont {J.}~\bibnamefont {Preskill}},\ }\bibfield  {title} {\bibinfo {title} {Reliable quantum computers},\ }\href {https://doi.org/10.1098/rspa.1998.0167} {\bibfield  {journal} {\bibinfo  {journal} {Proceedings of the Royal Society of London. Series A: Mathematical, Physical and Engineering Sciences}\ }\textbf {\bibinfo {volume} {454}},\ \bibinfo {pages} {385} (\bibinfo {year} {1998})}\BibitemShut {NoStop}%
\bibitem [{\citenamefont {Aharonov}\ and\ \citenamefont {Ben-Or}(2008)}]{aharonov2008}%
  \BibitemOpen
  \bibfield  {author} {\bibinfo {author} {\bibfnamefont {D.}~\bibnamefont {Aharonov}}\ and\ \bibinfo {author} {\bibfnamefont {M.}~\bibnamefont {Ben-Or}},\ }\bibfield  {title} {\bibinfo {title} {Fault-tolerant quantum computation with constant error rate},\ }\href {https://doi.org/10.1137/S0097539799359385} {\bibfield  {journal} {\bibinfo  {journal} {SIAM Journal on Computing}\ }\textbf {\bibinfo {volume} {38}},\ \bibinfo {pages} {1207} (\bibinfo {year} {2008})},\ \Eprint {https://arxiv.org/abs/https://doi.org/10.1137/S0097539799359385} {https://doi.org/10.1137/S0097539799359385} \BibitemShut {NoStop}%
\bibitem [{\citenamefont {Aliferis}\ \emph {et~al.}(2005)\citenamefont {Aliferis}, \citenamefont {Gottesman},\ and\ \citenamefont {Preskill}}]{aliferis2005quantum}%
  \BibitemOpen
  \bibfield  {author} {\bibinfo {author} {\bibfnamefont {P.}~\bibnamefont {Aliferis}}, \bibinfo {author} {\bibfnamefont {D.}~\bibnamefont {Gottesman}},\ and\ \bibinfo {author} {\bibfnamefont {J.}~\bibnamefont {Preskill}},\ }\bibfield  {title} {\bibinfo {title} {Quantum accuracy threshold for concatenated distance-3 codes},\ }\href@noop {} {\bibfield  {journal} {\bibinfo  {journal} {arXiv preprint quant-ph/0504218}\ } (\bibinfo {year} {2005})}\BibitemShut {NoStop}%
\bibitem [{\citenamefont {Koch}\ \emph {et~al.}(2007)\citenamefont {Koch}, \citenamefont {Terri}, \citenamefont {Gambetta}, \citenamefont {Houck}, \citenamefont {Schuster}, \citenamefont {Majer}, \citenamefont {Blais}, \citenamefont {Devoret}, \citenamefont {Girvin},\ and\ \citenamefont {Schoelkopf}}]{koch2007charge}%
  \BibitemOpen
  \bibfield  {author} {\bibinfo {author} {\bibfnamefont {J.}~\bibnamefont {Koch}}, \bibinfo {author} {\bibfnamefont {M.~Y.}\ \bibnamefont {Terri}}, \bibinfo {author} {\bibfnamefont {J.}~\bibnamefont {Gambetta}}, \bibinfo {author} {\bibfnamefont {A.~A.}\ \bibnamefont {Houck}}, \bibinfo {author} {\bibfnamefont {D.~I.}\ \bibnamefont {Schuster}}, \bibinfo {author} {\bibfnamefont {J.}~\bibnamefont {Majer}}, \bibinfo {author} {\bibfnamefont {A.}~\bibnamefont {Blais}}, \bibinfo {author} {\bibfnamefont {M.~H.}\ \bibnamefont {Devoret}}, \bibinfo {author} {\bibfnamefont {S.~M.}\ \bibnamefont {Girvin}},\ and\ \bibinfo {author} {\bibfnamefont {R.~J.}\ \bibnamefont {Schoelkopf}},\ }\bibfield  {title} {\bibinfo {title} {Charge-insensitive qubit design derived from the cooper pair box},\ }\href@noop {} {\bibfield  {journal} {\bibinfo  {journal} {Physical Review A}\ }\textbf {\bibinfo {volume} {76}},\ \bibinfo {pages} {042319} (\bibinfo {year} {2007})}\BibitemShut {NoStop}%
\bibitem [{\citenamefont {Bravyi}\ \emph {et~al.}(2022)\citenamefont {Bravyi}, \citenamefont {Dial}, \citenamefont {Gambetta}, \citenamefont {Gil},\ and\ \citenamefont {Nazario}}]{bravyi2022future}%
  \BibitemOpen
  \bibfield  {author} {\bibinfo {author} {\bibfnamefont {S.}~\bibnamefont {Bravyi}}, \bibinfo {author} {\bibfnamefont {O.}~\bibnamefont {Dial}}, \bibinfo {author} {\bibfnamefont {J.~M.}\ \bibnamefont {Gambetta}}, \bibinfo {author} {\bibfnamefont {D.}~\bibnamefont {Gil}},\ and\ \bibinfo {author} {\bibfnamefont {Z.}~\bibnamefont {Nazario}},\ }\bibfield  {title} {\bibinfo {title} {The future of quantum computing with superconducting qubits},\ }\href@noop {} {\bibfield  {journal} {\bibinfo  {journal} {Journal of Applied Physics}\ }\textbf {\bibinfo {volume} {132}} (\bibinfo {year} {2022})}\BibitemShut {NoStop}%
\bibitem [{\citenamefont {Dupont}\ \emph {et~al.}(2023)\citenamefont {Dupont}, \citenamefont {Evert}, \citenamefont {Hodson}, \citenamefont {Sundar}, \citenamefont {Jeffrey}, \citenamefont {Yamaguchi}, \citenamefont {Feng}, \citenamefont {Maciejewski}, \citenamefont {Hadfield}, \citenamefont {Alam}, \citenamefont {Wang}, \citenamefont {Grabbe}, \citenamefont {Lott}, \citenamefont {Rieffel}, \citenamefont {Venturelli},\ and\ \citenamefont {Reagor}}]{dupont2023quantum}%
  \BibitemOpen
  \bibfield  {author} {\bibinfo {author} {\bibfnamefont {M.}~\bibnamefont {Dupont}}, \bibinfo {author} {\bibfnamefont {B.}~\bibnamefont {Evert}}, \bibinfo {author} {\bibfnamefont {M.~J.}\ \bibnamefont {Hodson}}, \bibinfo {author} {\bibfnamefont {B.}~\bibnamefont {Sundar}}, \bibinfo {author} {\bibfnamefont {S.}~\bibnamefont {Jeffrey}}, \bibinfo {author} {\bibfnamefont {Y.}~\bibnamefont {Yamaguchi}}, \bibinfo {author} {\bibfnamefont {D.}~\bibnamefont {Feng}}, \bibinfo {author} {\bibfnamefont {F.~B.}\ \bibnamefont {Maciejewski}}, \bibinfo {author} {\bibfnamefont {S.}~\bibnamefont {Hadfield}}, \bibinfo {author} {\bibfnamefont {M.~S.}\ \bibnamefont {Alam}}, \bibinfo {author} {\bibfnamefont {Z.}~\bibnamefont {Wang}}, \bibinfo {author} {\bibfnamefont {S.}~\bibnamefont {Grabbe}}, \bibinfo {author} {\bibfnamefont {P.~A.}\ \bibnamefont {Lott}}, \bibinfo {author} {\bibfnamefont {E.~G.}\ \bibnamefont {Rieffel}}, \bibinfo {author} {\bibfnamefont {D.}~\bibnamefont {Venturelli}},\ and\ \bibinfo {author} {\bibfnamefont
  {M.~J.}\ \bibnamefont {Reagor}},\ }\bibfield  {title} {\bibinfo {title} {Quantum-enhanced greedy combinatorial optimization solver},\ }\href {https://doi.org/10.1126/sciadv.adi0487} {\bibfield  {journal} {\bibinfo  {journal} {Science Advances}\ }\textbf {\bibinfo {volume} {9}},\ \bibinfo {pages} {eadi0487} (\bibinfo {year} {2023})},\ \Eprint {https://arxiv.org/abs/https://www.science.org/doi/pdf/10.1126/sciadv.adi0487} {https://www.science.org/doi/pdf/10.1126/sciadv.adi0487} \BibitemShut {NoStop}%
\bibitem [{\citenamefont {Het{\'e}nyi}\ and\ \citenamefont {Wootton}(2024{\natexlab{a}})}]{hetenyi2024tailoring}%
  \BibitemOpen
  \bibfield  {author} {\bibinfo {author} {\bibfnamefont {B.}~\bibnamefont {Het{\'e}nyi}}\ and\ \bibinfo {author} {\bibfnamefont {J.~R.}\ \bibnamefont {Wootton}},\ }\bibfield  {title} {\bibinfo {title} {Tailoring quantum error correction to spin qubits},\ }\href@noop {} {\bibfield  {journal} {\bibinfo  {journal} {Physical Review A}\ }\textbf {\bibinfo {volume} {109}},\ \bibinfo {pages} {032433} (\bibinfo {year} {2024}{\natexlab{a}})}\BibitemShut {NoStop}%
\bibitem [{\citenamefont {Wu}\ \emph {et~al.}(2021)\citenamefont {Wu}, \citenamefont {Li}, \citenamefont {Zhang}, \citenamefont {Guerreschi}, \citenamefont {Ding},\ and\ \citenamefont {Xie}}]{wu2021mapping}%
  \BibitemOpen
  \bibfield  {author} {\bibinfo {author} {\bibfnamefont {A.}~\bibnamefont {Wu}}, \bibinfo {author} {\bibfnamefont {G.}~\bibnamefont {Li}}, \bibinfo {author} {\bibfnamefont {H.}~\bibnamefont {Zhang}}, \bibinfo {author} {\bibfnamefont {G.~G.}\ \bibnamefont {Guerreschi}}, \bibinfo {author} {\bibfnamefont {Y.}~\bibnamefont {Ding}},\ and\ \bibinfo {author} {\bibfnamefont {Y.}~\bibnamefont {Xie}},\ }\bibfield  {title} {\bibinfo {title} {Mapping surface code to superconducting quantum processors},\ }\href@noop {} {\bibfield  {journal} {\bibinfo  {journal} {arXiv preprint arXiv:2111.13729}\ } (\bibinfo {year} {2021})}\BibitemShut {NoStop}%
\bibitem [{\citenamefont {Kremer}\ \emph {et~al.}(2024)\citenamefont {Kremer}, \citenamefont {Villar}, \citenamefont {Paik}, \citenamefont {Duran}, \citenamefont {Faro},\ and\ \citenamefont {Cruz-Benito}}]{kremer2024practical}%
  \BibitemOpen
  \bibfield  {author} {\bibinfo {author} {\bibfnamefont {D.}~\bibnamefont {Kremer}}, \bibinfo {author} {\bibfnamefont {V.}~\bibnamefont {Villar}}, \bibinfo {author} {\bibfnamefont {H.}~\bibnamefont {Paik}}, \bibinfo {author} {\bibfnamefont {I.}~\bibnamefont {Duran}}, \bibinfo {author} {\bibfnamefont {I.}~\bibnamefont {Faro}},\ and\ \bibinfo {author} {\bibfnamefont {J.}~\bibnamefont {Cruz-Benito}},\ }\href@noop {} {\bibinfo {title} {Practical and efficient quantum circuit synthesis and transpiling with reinforcement learning}} (\bibinfo {year} {2024}),\ \Eprint {https://arxiv.org/abs/2405.13196} {arXiv:2405.13196 [quant-ph]} \BibitemShut {NoStop}%
\bibitem [{\citenamefont {Siraichi}\ \emph {et~al.}(2018)\citenamefont {Siraichi}, \citenamefont {Santos}, \citenamefont {Collange},\ and\ \citenamefont {Pereira}}]{Siraichi2018}%
  \BibitemOpen
  \bibfield  {author} {\bibinfo {author} {\bibfnamefont {M.~Y.}\ \bibnamefont {Siraichi}}, \bibinfo {author} {\bibfnamefont {V.~F.~d.}\ \bibnamefont {Santos}}, \bibinfo {author} {\bibfnamefont {C.}~\bibnamefont {Collange}},\ and\ \bibinfo {author} {\bibfnamefont {F.~M.~Q.}\ \bibnamefont {Pereira}},\ }\bibfield  {title} {\bibinfo {title} {Qubit allocation},\ }in\ \href {https://doi.org/10.1145/3168822} {\emph {\bibinfo {booktitle} {Proceedings of the 2018 International Symposium on Code Generation and Optimization}}},\ \bibinfo {series and number} {CGO 2018}\ (\bibinfo  {publisher} {Association for Computing Machinery},\ \bibinfo {address} {New York, NY, USA},\ \bibinfo {year} {2018})\ p.\ \bibinfo {pages} {113–125}\BibitemShut {NoStop}%
\bibitem [{\citenamefont {Choe}\ and\ \citenamefont {Koenig}(2024)}]{choe2024fault}%
  \BibitemOpen
  \bibfield  {author} {\bibinfo {author} {\bibfnamefont {S.~H.}\ \bibnamefont {Choe}}\ and\ \bibinfo {author} {\bibfnamefont {R.}~\bibnamefont {Koenig}},\ }\bibfield  {title} {\bibinfo {title} {How to fault-tolerantly realize any quantum circuit with local operations},\ }\href@noop {} {\bibfield  {journal} {\bibinfo  {journal} {arXiv preprint arXiv:2402.13863}\ } (\bibinfo {year} {2024})}\BibitemShut {NoStop}%
\bibitem [{\citenamefont {Pino}\ \emph {et~al.}(2021)\citenamefont {Pino}, \citenamefont {Dreiling}, \citenamefont {Figgatt}, \citenamefont {Gaebler}, \citenamefont {Moses}, \citenamefont {Allman}, \citenamefont {Baldwin}, \citenamefont {Foss-Feig}, \citenamefont {Hayes}, \citenamefont {Mayer}, \citenamefont {Ryan-Anderson},\ and\ \citenamefont {Neyenhuis}}]{pino2021demonstration}%
  \BibitemOpen
  \bibfield  {author} {\bibinfo {author} {\bibfnamefont {J.~M.}\ \bibnamefont {Pino}}, \bibinfo {author} {\bibfnamefont {J.~M.}\ \bibnamefont {Dreiling}}, \bibinfo {author} {\bibfnamefont {C.}~\bibnamefont {Figgatt}}, \bibinfo {author} {\bibfnamefont {J.~P.}\ \bibnamefont {Gaebler}}, \bibinfo {author} {\bibfnamefont {S.~A.}\ \bibnamefont {Moses}}, \bibinfo {author} {\bibfnamefont {M.~S.}\ \bibnamefont {Allman}}, \bibinfo {author} {\bibfnamefont {C.~H.}\ \bibnamefont {Baldwin}}, \bibinfo {author} {\bibfnamefont {M.}~\bibnamefont {Foss-Feig}}, \bibinfo {author} {\bibfnamefont {D.}~\bibnamefont {Hayes}}, \bibinfo {author} {\bibfnamefont {K.}~\bibnamefont {Mayer}}, \bibinfo {author} {\bibfnamefont {C.}~\bibnamefont {Ryan-Anderson}},\ and\ \bibinfo {author} {\bibfnamefont {B.}~\bibnamefont {Neyenhuis}},\ }\bibfield  {title} {\bibinfo {title} {Demonstration of the trapped-ion quantum ccd computer architecture},\ }\href {https://doi.org/10.1038/s41586-021-03318-4} {\bibfield  {journal} {\bibinfo  {journal} {Nature}\
  }\textbf {\bibinfo {volume} {592}},\ \bibinfo {pages} {209–213} (\bibinfo {year} {2021})}\BibitemShut {NoStop}%
\bibitem [{\citenamefont {Bluvstein}\ \emph {et~al.}(2023)\citenamefont {Bluvstein}, \citenamefont {Evered}, \citenamefont {Geim}, \citenamefont {Li}, \citenamefont {Zhou}, \citenamefont {Manovitz}, \citenamefont {Ebadi}, \citenamefont {Cain}, \citenamefont {Kalinowski}, \citenamefont {Hangleiter}, \citenamefont {Bonilla~Ataides}, \citenamefont {Maskara}, \citenamefont {Cong}, \citenamefont {Gao}, \citenamefont {Sales~Rodriguez}, \citenamefont {Karolyshyn}, \citenamefont {Semeghini}, \citenamefont {Gullans}, \citenamefont {Greiner}, \citenamefont {Vuletić},\ and\ \citenamefont {Lukin}}]{Bluvstein_2023}%
  \BibitemOpen
  \bibfield  {author} {\bibinfo {author} {\bibfnamefont {D.}~\bibnamefont {Bluvstein}}, \bibinfo {author} {\bibfnamefont {S.~J.}\ \bibnamefont {Evered}}, \bibinfo {author} {\bibfnamefont {A.~A.}\ \bibnamefont {Geim}}, \bibinfo {author} {\bibfnamefont {S.~H.}\ \bibnamefont {Li}}, \bibinfo {author} {\bibfnamefont {H.}~\bibnamefont {Zhou}}, \bibinfo {author} {\bibfnamefont {T.}~\bibnamefont {Manovitz}}, \bibinfo {author} {\bibfnamefont {S.}~\bibnamefont {Ebadi}}, \bibinfo {author} {\bibfnamefont {M.}~\bibnamefont {Cain}}, \bibinfo {author} {\bibfnamefont {M.}~\bibnamefont {Kalinowski}}, \bibinfo {author} {\bibfnamefont {D.}~\bibnamefont {Hangleiter}}, \bibinfo {author} {\bibfnamefont {J.~P.}\ \bibnamefont {Bonilla~Ataides}}, \bibinfo {author} {\bibfnamefont {N.}~\bibnamefont {Maskara}}, \bibinfo {author} {\bibfnamefont {I.}~\bibnamefont {Cong}}, \bibinfo {author} {\bibfnamefont {X.}~\bibnamefont {Gao}}, \bibinfo {author} {\bibfnamefont {P.}~\bibnamefont {Sales~Rodriguez}}, \bibinfo {author} {\bibfnamefont
  {T.}~\bibnamefont {Karolyshyn}}, \bibinfo {author} {\bibfnamefont {G.}~\bibnamefont {Semeghini}}, \bibinfo {author} {\bibfnamefont {M.~J.}\ \bibnamefont {Gullans}}, \bibinfo {author} {\bibfnamefont {M.}~\bibnamefont {Greiner}}, \bibinfo {author} {\bibfnamefont {V.}~\bibnamefont {Vuletić}},\ and\ \bibinfo {author} {\bibfnamefont {M.~D.}\ \bibnamefont {Lukin}},\ }\bibfield  {title} {\bibinfo {title} {Logical quantum processor based on reconfigurable atom arrays},\ }\href {https://doi.org/10.1038/s41586-023-06927-3} {\bibfield  {journal} {\bibinfo  {journal} {Nature}\ }\textbf {\bibinfo {volume} {626}},\ \bibinfo {pages} {58–65} (\bibinfo {year} {2023})}\BibitemShut {NoStop}%
\bibitem [{\citenamefont {Stade}\ \emph {et~al.}(2024)\citenamefont {Stade}, \citenamefont {Schmid}, \citenamefont {Burgholzer},\ and\ \citenamefont {Wille}}]{stade2024abstract}%
  \BibitemOpen
  \bibfield  {author} {\bibinfo {author} {\bibfnamefont {Y.}~\bibnamefont {Stade}}, \bibinfo {author} {\bibfnamefont {L.}~\bibnamefont {Schmid}}, \bibinfo {author} {\bibfnamefont {L.}~\bibnamefont {Burgholzer}},\ and\ \bibinfo {author} {\bibfnamefont {R.}~\bibnamefont {Wille}},\ }\bibfield  {title} {\bibinfo {title} {An abstract model and efficient routing for logical entangling gates on zoned neutral atom architectures},\ }\href@noop {} {\bibfield  {journal} {\bibinfo  {journal} {arXiv preprint arXiv:2405.08068}\ } (\bibinfo {year} {2024})}\BibitemShut {NoStop}%
\bibitem [{\citenamefont {Finigan}\ \emph {et~al.}(2018)\citenamefont {Finigan}, \citenamefont {Cubeddu}, \citenamefont {Lively}, \citenamefont {Flick},\ and\ \citenamefont {Narang}}]{finigan2018qubit}%
  \BibitemOpen
  \bibfield  {author} {\bibinfo {author} {\bibfnamefont {W.}~\bibnamefont {Finigan}}, \bibinfo {author} {\bibfnamefont {M.}~\bibnamefont {Cubeddu}}, \bibinfo {author} {\bibfnamefont {T.}~\bibnamefont {Lively}}, \bibinfo {author} {\bibfnamefont {J.}~\bibnamefont {Flick}},\ and\ \bibinfo {author} {\bibfnamefont {P.}~\bibnamefont {Narang}},\ }\href@noop {} {\bibinfo {title} {Qubit allocation for noisy intermediate-scale quantum computers}} (\bibinfo {year} {2018}),\ \Eprint {https://arxiv.org/abs/1810.08291} {arXiv:1810.08291 [quant-ph]} \BibitemShut {NoStop}%
\bibitem [{\citenamefont {Li}\ \emph {et~al.}(2019)\citenamefont {Li}, \citenamefont {Ding},\ and\ \citenamefont {Xie}}]{li2019tackling}%
  \BibitemOpen
  \bibfield  {author} {\bibinfo {author} {\bibfnamefont {G.}~\bibnamefont {Li}}, \bibinfo {author} {\bibfnamefont {Y.}~\bibnamefont {Ding}},\ and\ \bibinfo {author} {\bibfnamefont {Y.}~\bibnamefont {Xie}},\ }\bibfield  {title} {\bibinfo {title} {Tackling the qubit mapping problem for nisq-era quantum devices},\ }in\ \href@noop {} {\emph {\bibinfo {booktitle} {Proceedings of the Twenty-Fourth International Conference on Architectural Support for Programming Languages and Operating Systems}}}\ (\bibinfo {year} {2019})\ pp.\ \bibinfo {pages} {1001--1014}\BibitemShut {NoStop}%
\bibitem [{\citenamefont {Childs}\ \emph {et~al.}(2019)\citenamefont {Childs}, \citenamefont {Schoute},\ and\ \citenamefont {Unsal}}]{childs2019circuit}%
  \BibitemOpen
  \bibfield  {author} {\bibinfo {author} {\bibfnamefont {A.~M.}\ \bibnamefont {Childs}}, \bibinfo {author} {\bibfnamefont {E.}~\bibnamefont {Schoute}},\ and\ \bibinfo {author} {\bibfnamefont {C.~M.}\ \bibnamefont {Unsal}},\ }\bibfield  {title} {\bibinfo {title} {Circuit transformations for quantum architectures},\ }\href@noop {} {\bibfield  {journal} {\bibinfo  {journal} {arXiv preprint arXiv:1902.09102}\ } (\bibinfo {year} {2019})}\BibitemShut {NoStop}%
\bibitem [{\citenamefont {Nannicini}\ \emph {et~al.}(2022)\citenamefont {Nannicini}, \citenamefont {Bishop}, \citenamefont {G\"{u}nl\"{u}k},\ and\ \citenamefont {Jurcevic}}]{nannicini2022optimal}%
  \BibitemOpen
  \bibfield  {author} {\bibinfo {author} {\bibfnamefont {G.}~\bibnamefont {Nannicini}}, \bibinfo {author} {\bibfnamefont {L.~S.}\ \bibnamefont {Bishop}}, \bibinfo {author} {\bibfnamefont {O.}~\bibnamefont {G\"{u}nl\"{u}k}},\ and\ \bibinfo {author} {\bibfnamefont {P.}~\bibnamefont {Jurcevic}},\ }\bibfield  {title} {\bibinfo {title} {Optimal qubit assignment and routing via integer programming},\ }\bibfield  {journal} {\bibinfo  {journal} {ACM Transactions on Quantum Computing}\ }\textbf {\bibinfo {volume} {4}},\ \href {https://doi.org/10.1145/3544563} {10.1145/3544563} (\bibinfo {year} {2022})\BibitemShut {NoStop}%
\bibitem [{\citenamefont {Ito}\ \emph {et~al.}(2023)\citenamefont {Ito}, \citenamefont {Kakimura}, \citenamefont {Kamiyama}, \citenamefont {Kobayashi},\ and\ \citenamefont {Okamoto}}]{ito2023algorithmic}%
  \BibitemOpen
  \bibfield  {author} {\bibinfo {author} {\bibfnamefont {T.}~\bibnamefont {Ito}}, \bibinfo {author} {\bibfnamefont {N.}~\bibnamefont {Kakimura}}, \bibinfo {author} {\bibfnamefont {N.}~\bibnamefont {Kamiyama}}, \bibinfo {author} {\bibfnamefont {Y.}~\bibnamefont {Kobayashi}},\ and\ \bibinfo {author} {\bibfnamefont {Y.}~\bibnamefont {Okamoto}},\ }\href@noop {} {\bibinfo {title} {Algorithmic theory of qubit routing}} (\bibinfo {year} {2023}),\ \Eprint {https://arxiv.org/abs/2305.02059} {arXiv:2305.02059 [cs.DS]} \BibitemShut {NoStop}%
\bibitem [{\citenamefont {Gottesman}(2022)}]{gottesman2022opportunities}%
  \BibitemOpen
  \bibfield  {author} {\bibinfo {author} {\bibfnamefont {D.}~\bibnamefont {Gottesman}},\ }\href@noop {} {\bibinfo {title} {Opportunities and challenges in fault-tolerant quantum computation}} (\bibinfo {year} {2022}),\ \Eprint {https://arxiv.org/abs/2210.15844} {arXiv:2210.15844 [quant-ph]} \BibitemShut {NoStop}%
\bibitem [{\citenamefont {Aliferis}\ and\ \citenamefont {Preskill}(2008)}]{AP2008}%
  \BibitemOpen
  \bibfield  {author} {\bibinfo {author} {\bibfnamefont {P.}~\bibnamefont {Aliferis}}\ and\ \bibinfo {author} {\bibfnamefont {J.}~\bibnamefont {Preskill}},\ }\bibfield  {title} {\bibinfo {title} {Fault-tolerant quantum computation against biased noise},\ }\href {https://doi.org/10.1103/PhysRevA.78.052331} {\bibfield  {journal} {\bibinfo  {journal} {Phys. Rev. A}\ }\textbf {\bibinfo {volume} {78}},\ \bibinfo {pages} {052331} (\bibinfo {year} {2008})}\BibitemShut {NoStop}%
\bibitem [{\citenamefont {Tuckett}\ \emph {et~al.}(2020)\citenamefont {Tuckett}, \citenamefont {Bartlett}, \citenamefont {Flammia},\ and\ \citenamefont {Brown}}]{Tuckett2020}%
  \BibitemOpen
  \bibfield  {author} {\bibinfo {author} {\bibfnamefont {D.~K.}\ \bibnamefont {Tuckett}}, \bibinfo {author} {\bibfnamefont {S.~D.}\ \bibnamefont {Bartlett}}, \bibinfo {author} {\bibfnamefont {S.~T.}\ \bibnamefont {Flammia}},\ and\ \bibinfo {author} {\bibfnamefont {B.~J.}\ \bibnamefont {Brown}},\ }\bibfield  {title} {\bibinfo {title} {Fault-tolerant thresholds for the surface code in excess of 5\% under biased noise},\ }\href {https://doi.org/10.1103/PhysRevLett.124.130501} {\bibfield  {journal} {\bibinfo  {journal} {Phys. Rev. Lett.}\ }\textbf {\bibinfo {volume} {124}},\ \bibinfo {pages} {130501} (\bibinfo {year} {2020})}\BibitemShut {NoStop}%
\bibitem [{\citenamefont {Gottesman}(2013)}]{gottesman2013fault}%
  \BibitemOpen
  \bibfield  {author} {\bibinfo {author} {\bibfnamefont {D.}~\bibnamefont {Gottesman}},\ }\bibfield  {title} {\bibinfo {title} {Fault-tolerant quantum computation with constant overhead},\ }\href@noop {} {\bibfield  {journal} {\bibinfo  {journal} {arXiv preprint arXiv:1310.2984}\ } (\bibinfo {year} {2013})}\BibitemShut {NoStop}%
\bibitem [{\citenamefont {Delfosse}\ and\ \citenamefont {Reichardt}(2020)}]{delfosse2020short}%
  \BibitemOpen
  \bibfield  {author} {\bibinfo {author} {\bibfnamefont {N.}~\bibnamefont {Delfosse}}\ and\ \bibinfo {author} {\bibfnamefont {B.~W.}\ \bibnamefont {Reichardt}},\ }\bibfield  {title} {\bibinfo {title} {Short shor-style syndrome sequences},\ }\href@noop {} {\bibfield  {journal} {\bibinfo  {journal} {arXiv preprint arXiv:2008.05051}\ } (\bibinfo {year} {2020})}\BibitemShut {NoStop}%
\bibitem [{\citenamefont {Tansuwannont}\ \emph {et~al.}(2023)\citenamefont {Tansuwannont}, \citenamefont {Pato},\ and\ \citenamefont {Brown}}]{tansuwannont2023adaptive}%
  \BibitemOpen
  \bibfield  {author} {\bibinfo {author} {\bibfnamefont {T.}~\bibnamefont {Tansuwannont}}, \bibinfo {author} {\bibfnamefont {B.}~\bibnamefont {Pato}},\ and\ \bibinfo {author} {\bibfnamefont {K.~R.}\ \bibnamefont {Brown}},\ }\bibfield  {title} {\bibinfo {title} {Adaptive syndrome measurements for shor-style error correction},\ }\href@noop {} {\bibfield  {journal} {\bibinfo  {journal} {Quantum}\ }\textbf {\bibinfo {volume} {7}},\ \bibinfo {pages} {1075} (\bibinfo {year} {2023})}\BibitemShut {NoStop}%
\bibitem [{\citenamefont {Fowler}\ \emph {et~al.}(2012)\citenamefont {Fowler}, \citenamefont {Mariantoni}, \citenamefont {Martinis},\ and\ \citenamefont {Cleland}}]{fowler2012surface}%
  \BibitemOpen
  \bibfield  {author} {\bibinfo {author} {\bibfnamefont {A.~G.}\ \bibnamefont {Fowler}}, \bibinfo {author} {\bibfnamefont {M.}~\bibnamefont {Mariantoni}}, \bibinfo {author} {\bibfnamefont {J.~M.}\ \bibnamefont {Martinis}},\ and\ \bibinfo {author} {\bibfnamefont {A.~N.}\ \bibnamefont {Cleland}},\ }\bibfield  {title} {\bibinfo {title} {Surface codes: Towards practical large-scale quantum computation},\ }\href@noop {} {\bibfield  {journal} {\bibinfo  {journal} {Physical Review A}\ }\textbf {\bibinfo {volume} {86}},\ \bibinfo {pages} {032324} (\bibinfo {year} {2012})}\BibitemShut {NoStop}%
\bibitem [{\citenamefont {Tomita}\ and\ \citenamefont {Svore}(2014)}]{tomita2014low}%
  \BibitemOpen
  \bibfield  {author} {\bibinfo {author} {\bibfnamefont {Y.}~\bibnamefont {Tomita}}\ and\ \bibinfo {author} {\bibfnamefont {K.~M.}\ \bibnamefont {Svore}},\ }\bibfield  {title} {\bibinfo {title} {Low-distance surface codes under realistic quantum noise},\ }\href@noop {} {\bibfield  {journal} {\bibinfo  {journal} {Physical Review A}\ }\textbf {\bibinfo {volume} {90}},\ \bibinfo {pages} {062320} (\bibinfo {year} {2014})}\BibitemShut {NoStop}%
\bibitem [{\citenamefont {Gidney}(2021)}]{gidney2021stim}%
  \BibitemOpen
  \bibfield  {author} {\bibinfo {author} {\bibfnamefont {C.}~\bibnamefont {Gidney}},\ }\bibfield  {title} {\bibinfo {title} {Stim: a fast stabilizer circuit simulator},\ }\href@noop {} {\bibfield  {journal} {\bibinfo  {journal} {Quantum}\ }\textbf {\bibinfo {volume} {5}},\ \bibinfo {pages} {497} (\bibinfo {year} {2021})}\BibitemShut {NoStop}%
\bibitem [{\citenamefont {Higgott}(2022)}]{higgott2022pymatching}%
  \BibitemOpen
  \bibfield  {author} {\bibinfo {author} {\bibfnamefont {O.}~\bibnamefont {Higgott}},\ }\bibfield  {title} {\bibinfo {title} {Pymatching: A python package for decoding quantum codes with minimum-weight perfect matching},\ }\href@noop {} {\bibfield  {journal} {\bibinfo  {journal} {ACM Transactions on Quantum Computing}\ }\textbf {\bibinfo {volume} {3}},\ \bibinfo {pages} {1} (\bibinfo {year} {2022})}\BibitemShut {NoStop}%
\bibitem [{\citenamefont {Higgott}\ and\ \citenamefont {Gidney}(2023)}]{higgott2023sparse}%
  \BibitemOpen
  \bibfield  {author} {\bibinfo {author} {\bibfnamefont {O.}~\bibnamefont {Higgott}}\ and\ \bibinfo {author} {\bibfnamefont {C.}~\bibnamefont {Gidney}},\ }\bibfield  {title} {\bibinfo {title} {Sparse blossom: correcting a million errors per core second with minimum-weight matching},\ }\href@noop {} {\bibfield  {journal} {\bibinfo  {journal} {arXiv preprint arXiv:2303.15933}\ } (\bibinfo {year} {2023})}\BibitemShut {NoStop}%
\bibitem [{\citenamefont {McEwen}\ \emph {et~al.}(2023)\citenamefont {McEwen}, \citenamefont {Bacon},\ and\ \citenamefont {Gidney}}]{mcewen2023relaxing}%
  \BibitemOpen
  \bibfield  {author} {\bibinfo {author} {\bibfnamefont {M.}~\bibnamefont {McEwen}}, \bibinfo {author} {\bibfnamefont {D.}~\bibnamefont {Bacon}},\ and\ \bibinfo {author} {\bibfnamefont {C.}~\bibnamefont {Gidney}},\ }\bibfield  {title} {\bibinfo {title} {Relaxing hardware requirements for surface code circuits using time-dynamics},\ }\href@noop {} {\bibfield  {journal} {\bibinfo  {journal} {arXiv preprint arXiv:2302.02192}\ } (\bibinfo {year} {2023})}\BibitemShut {NoStop}%
\bibitem [{\citenamefont {Benito}\ \emph {et~al.}(2024)\citenamefont {Benito}, \citenamefont {L{\'o}pez}, \citenamefont {Peropadre},\ and\ \citenamefont {Bermudez}}]{benito2024comparative}%
  \BibitemOpen
  \bibfield  {author} {\bibinfo {author} {\bibfnamefont {C.}~\bibnamefont {Benito}}, \bibinfo {author} {\bibfnamefont {E.}~\bibnamefont {L{\'o}pez}}, \bibinfo {author} {\bibfnamefont {B.}~\bibnamefont {Peropadre}},\ and\ \bibinfo {author} {\bibfnamefont {A.}~\bibnamefont {Bermudez}},\ }\bibfield  {title} {\bibinfo {title} {Comparative study of quantum error correction strategies for the heavy-hexagonal lattice},\ }\href@noop {} {\bibfield  {journal} {\bibinfo  {journal} {arXiv preprint arXiv:2402.02185}\ } (\bibinfo {year} {2024})}\BibitemShut {NoStop}%
\bibitem [{\citenamefont {Het{\'e}nyi}\ and\ \citenamefont {Wootton}(2024{\natexlab{b}})}]{hetenyi2024creating}%
  \BibitemOpen
  \bibfield  {author} {\bibinfo {author} {\bibfnamefont {B.}~\bibnamefont {Het{\'e}nyi}}\ and\ \bibinfo {author} {\bibfnamefont {J.~R.}\ \bibnamefont {Wootton}},\ }\bibfield  {title} {\bibinfo {title} {Creating entangled logical qubits in the heavy-hex lattice with topological codes},\ }\href@noop {} {\bibfield  {journal} {\bibinfo  {journal} {arXiv preprint arXiv:2404.15989}\ } (\bibinfo {year} {2024}{\natexlab{b}})}\BibitemShut {NoStop}%
\bibitem [{\citenamefont {Terhal}\ and\ \citenamefont {Burkard}(2005)}]{terhal2005fault}%
  \BibitemOpen
  \bibfield  {author} {\bibinfo {author} {\bibfnamefont {B.~M.}\ \bibnamefont {Terhal}}\ and\ \bibinfo {author} {\bibfnamefont {G.}~\bibnamefont {Burkard}},\ }\bibfield  {title} {\bibinfo {title} {Fault-tolerant quantum computation for local non-markovian noise},\ }\href@noop {} {\bibfield  {journal} {\bibinfo  {journal} {Physical Review A}\ }\textbf {\bibinfo {volume} {71}},\ \bibinfo {pages} {012336} (\bibinfo {year} {2005})}\BibitemShut {NoStop}%
\bibitem [{\citenamefont {Aharonov}\ \emph {et~al.}(2006)\citenamefont {Aharonov}, \citenamefont {Kitaev},\ and\ \citenamefont {Preskill}}]{aharonov2006fault}%
  \BibitemOpen
  \bibfield  {author} {\bibinfo {author} {\bibfnamefont {D.}~\bibnamefont {Aharonov}}, \bibinfo {author} {\bibfnamefont {A.}~\bibnamefont {Kitaev}},\ and\ \bibinfo {author} {\bibfnamefont {J.}~\bibnamefont {Preskill}},\ }\bibfield  {title} {\bibinfo {title} {Fault-tolerant quantum computation with long-range correlated noise},\ }\href@noop {} {\bibfield  {journal} {\bibinfo  {journal} {Physical review letters}\ }\textbf {\bibinfo {volume} {96}},\ \bibinfo {pages} {050504} (\bibinfo {year} {2006})}\BibitemShut {NoStop}%
\bibitem [{\citenamefont {Ng}\ and\ \citenamefont {Preskill}(2009)}]{ng2009fault}%
  \BibitemOpen
  \bibfield  {author} {\bibinfo {author} {\bibfnamefont {H.~K.}\ \bibnamefont {Ng}}\ and\ \bibinfo {author} {\bibfnamefont {J.}~\bibnamefont {Preskill}},\ }\bibfield  {title} {\bibinfo {title} {Fault-tolerant quantum computation versus gaussian noise},\ }\href@noop {} {\bibfield  {journal} {\bibinfo  {journal} {Physical Review A}\ }\textbf {\bibinfo {volume} {79}},\ \bibinfo {pages} {032318} (\bibinfo {year} {2009})}\BibitemShut {NoStop}%
\bibitem [{\citenamefont {Higgott}\ \emph {et~al.}(2023)\citenamefont {Higgott}, \citenamefont {Bohdanowicz}, \citenamefont {Kubica}, \citenamefont {Flammia},\ and\ \citenamefont {Campbell}}]{higgott2023improved}%
  \BibitemOpen
  \bibfield  {author} {\bibinfo {author} {\bibfnamefont {O.}~\bibnamefont {Higgott}}, \bibinfo {author} {\bibfnamefont {T.~C.}\ \bibnamefont {Bohdanowicz}}, \bibinfo {author} {\bibfnamefont {A.}~\bibnamefont {Kubica}}, \bibinfo {author} {\bibfnamefont {S.~T.}\ \bibnamefont {Flammia}},\ and\ \bibinfo {author} {\bibfnamefont {E.~T.}\ \bibnamefont {Campbell}},\ }\bibfield  {title} {\bibinfo {title} {Improved decoding of circuit noise and fragile boundaries of tailored surface codes},\ }\href@noop {} {\bibfield  {journal} {\bibinfo  {journal} {Physical Review X}\ }\textbf {\bibinfo {volume} {13}},\ \bibinfo {pages} {031007} (\bibinfo {year} {2023})}\BibitemShut {NoStop}%
\bibitem [{\citenamefont {Chen}\ \emph {et~al.}(2022)\citenamefont {Chen}, \citenamefont {Yoder}, \citenamefont {Kim}, \citenamefont {Sundaresan}, \citenamefont {Srinivasan}, \citenamefont {Li}, \citenamefont {C{\'o}rcoles}, \citenamefont {Cross},\ and\ \citenamefont {Takita}}]{chen2022calibrated}%
  \BibitemOpen
  \bibfield  {author} {\bibinfo {author} {\bibfnamefont {E.~H.}\ \bibnamefont {Chen}}, \bibinfo {author} {\bibfnamefont {T.~J.}\ \bibnamefont {Yoder}}, \bibinfo {author} {\bibfnamefont {Y.}~\bibnamefont {Kim}}, \bibinfo {author} {\bibfnamefont {N.}~\bibnamefont {Sundaresan}}, \bibinfo {author} {\bibfnamefont {S.}~\bibnamefont {Srinivasan}}, \bibinfo {author} {\bibfnamefont {M.}~\bibnamefont {Li}}, \bibinfo {author} {\bibfnamefont {A.~D.}\ \bibnamefont {C{\'o}rcoles}}, \bibinfo {author} {\bibfnamefont {A.~W.}\ \bibnamefont {Cross}},\ and\ \bibinfo {author} {\bibfnamefont {M.}~\bibnamefont {Takita}},\ }\bibfield  {title} {\bibinfo {title} {Calibrated decoders for experimental quantum error correction},\ }\href@noop {} {\bibfield  {journal} {\bibinfo  {journal} {Physical Review Letters}\ }\textbf {\bibinfo {volume} {128}},\ \bibinfo {pages} {110504} (\bibinfo {year} {2022})}\BibitemShut {NoStop}%
\bibitem [{\citenamefont {Lin}\ \emph {et~al.}(2014)\citenamefont {Lin}, \citenamefont {Sur-Kolay},\ and\ \citenamefont {Jha}}]{lin2014paqcs}%
  \BibitemOpen
  \bibfield  {author} {\bibinfo {author} {\bibfnamefont {C.-C.}\ \bibnamefont {Lin}}, \bibinfo {author} {\bibfnamefont {S.}~\bibnamefont {Sur-Kolay}},\ and\ \bibinfo {author} {\bibfnamefont {N.~K.}\ \bibnamefont {Jha}},\ }\bibfield  {title} {\bibinfo {title} {Paqcs: Physical design-aware fault-tolerant quantum circuit synthesis},\ }\href@noop {} {\bibfield  {journal} {\bibinfo  {journal} {IEEE Transactions on Very Large Scale Integration (VLSI) Systems}\ }\textbf {\bibinfo {volume} {23}},\ \bibinfo {pages} {1221} (\bibinfo {year} {2014})}\BibitemShut {NoStop}%
\bibitem [{\citenamefont {Lao}\ \emph {et~al.}(2018)\citenamefont {Lao}, \citenamefont {van Wee}, \citenamefont {Ashraf}, \citenamefont {van Someren}, \citenamefont {Khammassi}, \citenamefont {Bertels},\ and\ \citenamefont {Almudever}}]{Lao_2019}%
  \BibitemOpen
  \bibfield  {author} {\bibinfo {author} {\bibfnamefont {L.}~\bibnamefont {Lao}}, \bibinfo {author} {\bibfnamefont {B.}~\bibnamefont {van Wee}}, \bibinfo {author} {\bibfnamefont {I.}~\bibnamefont {Ashraf}}, \bibinfo {author} {\bibfnamefont {J.}~\bibnamefont {van Someren}}, \bibinfo {author} {\bibfnamefont {N.}~\bibnamefont {Khammassi}}, \bibinfo {author} {\bibfnamefont {K.}~\bibnamefont {Bertels}},\ and\ \bibinfo {author} {\bibfnamefont {C.~G.}\ \bibnamefont {Almudever}},\ }\bibfield  {title} {\bibinfo {title} {Mapping of lattice surgery-based quantum circuits on surface code architectures},\ }\href {https://doi.org/10.1088/2058-9565/aadd1a} {\bibfield  {journal} {\bibinfo  {journal} {Quantum Science and Technology}\ }\textbf {\bibinfo {volume} {4}},\ \bibinfo {pages} {015005} (\bibinfo {year} {2018})}\BibitemShut {NoStop}%
\bibitem [{\citenamefont {Dennis}\ \emph {et~al.}(2002)\citenamefont {Dennis}, \citenamefont {Kitaev}, \citenamefont {Landahl},\ and\ \citenamefont {Preskill}}]{dennis2002topological}%
  \BibitemOpen
  \bibfield  {author} {\bibinfo {author} {\bibfnamefont {E.}~\bibnamefont {Dennis}}, \bibinfo {author} {\bibfnamefont {A.}~\bibnamefont {Kitaev}}, \bibinfo {author} {\bibfnamefont {A.}~\bibnamefont {Landahl}},\ and\ \bibinfo {author} {\bibfnamefont {J.}~\bibnamefont {Preskill}},\ }\bibfield  {title} {\bibinfo {title} {Topological quantum memory},\ }\href@noop {} {\bibfield  {journal} {\bibinfo  {journal} {Journal of Mathematical Physics}\ }\textbf {\bibinfo {volume} {43}},\ \bibinfo {pages} {4452} (\bibinfo {year} {2002})}\BibitemShut {NoStop}%
\end{thebibliography}%

%%%%%%%%%%%%%%%%%%%%%%%%%%
%%%%%%%%%%%%%%%%%%%%%%%%%%
\clearpage
\appendix

%%%%%%%%%%%%%%%%%%%%%%%%%%
%%%%%%%%%%%%%%%%%%%%%%%%%%
\section{Search algorithms for EPP routing schedules}\label{app:Routing}

Here, we discuss search algorithms for routing schedules and explain how to incorporate our EPP constraints in a straightforward manner. We focus on a simple search procedure based on a greedy, distance-minimizing search \cite{childs2019circuit}, but the same principles apply to other more elaborate methods \cite{lin2014paqcs,childs2019circuit,Lao_2019}.

We begin with an abstract quantum circuit given as an ordered product $U = U_LU_{L-1}\ldots U_1$ of $L$ layers, where each $U_i$ contains two-qubit gates with no overlapping supports. We initialize $E^1_I, ..., E^L_I$ to be the edges in the interaction graph $G_I$ corresponding to the two-qubit gates in $U_1, ..., U_L$, respectively. The interaction graph is denoted as $G_I\equiv(V_I, E_I)$, where $V_I(E_I)$ is the set of nodes(edges) of the graph; similarly, the device graph is $G_D\equiv (V_D,E_D)$. For the embedding to be possible, we must have $|V_D|\geq |V_I|$. A qubit mapping $\hat p_t$ denotes the collective assignment of abstract qubits to physical qubits at layer $t$.

Our goal here is to iteratively execute all two-qubit gates in $E^1_I, ..., E^L_I$, and we keep track of the executed gates by removing them from the set. This is achieved by using swap gates to bring abstract qubits next to one another in the device graph $G_D\equiv (V_D,E_D)$, so that the remaining gates can be executed. In other words, defining $d:V_D \times V_D$ as the geodesic or shortest-distance function between nodes in the device graph $G_D$, for all edges $(q_1, q_2) \in E^I_I$, swap gates are used to update the mapping $\hat{p}$ so that $d(q_1, q_2) = 1$, in which case the edge $(q_1, q_2)$ can be removed from $E^1_I$. To quantify the collective distance between qubits of all edges in an edgeset $E$, define the objective function:
\begin{equation} \label{eq:objective_func]}
    R(E) \equiv \sum_{(q_1, q_2) \in E} d(\hat{p}(q_1), \hat{p}(q_2)).
\end{equation}

Firstly, an initial embedding $\hat{p}_0$ is chosen, possibly in a manner that maximizes the number of executable edges in the first layer $E^1_I$. These edges are subsequently executed, i.e., removed from $E^1_I$. Next, consider the set of edges in $G_D$ where executing the corresponding swap gates at those edges maximally decreases $R(E^1_I)$. Here, we further impose the EPP constraints that allow only types-1 or 2 SWAPs. SWAPs are executed at these edges, the embedding is updated accordingly (to $\hat{p}_1$), and the set of remaining edges $E^1_I$ are checked to see if any satisfies $d = 1$, in which case they are executed (removed from $E^1_I$). If no SWAPs or gates in $E^1_I$ can be executed, one of the remaining edges in $E^1_I$ is selected arbitrarily, and a type-1 or 2 SWAP that decreases $d$ is executed. The above procedure is repeated until $E^1_I$ is empty, at which point we move on to edges in the second layer $E^2_I$, and so on until $E^1_I, ..., E^L_I$ are all empty. If, at any step of the way, a type-1 or 2 SWAP move cannot be found, the algorithm fails and terminates without a solution. 

If successful, the above algorithm returns an EPP routing schedule. If it terminates without a solution, either an EPP schedule simply does not exist---this may be very likely for highly constrained device graphs---or the above greedy search must be improved to search through more routing schedules, typically at the expense of higher computational costs. One such improvement is to increase the depth of the search in each iteration, i.e., allowing the algorithm to look more than one move ahead to avoid getting stuck at a local minima. We refer the reader to Ref.~\cite{childs2019circuit} for further discussion of this aspect.

We elaborate on how this search algorithm was modified to produce EPP routing schedules for topological codes such as the surface code in App.~\ref{app:SCRouting2}.

\section{Details on surface code examples}
Here, we provide additional details for the surface code examples discussed in the main text.

%%%%%%
\subsection{Search algorithm for surface code routing schedules} \label{app:SCRouting2}

Topological codes, including surface codes, possess a large degree of locality and translational symmetry in the defining code stabilizers, directly reflected in the SE circuits and the associated interaction graphs. In cases where the quantum device has similar symmetries (common in current quantum hardware architecture), these properties can be leveraged to reduce the size of the routing search problem. Here, we describe a search algorithm that yields constant-depth (i.e., does not scale with code distance $d$) routing schedules for the surface code error-correction circuit, with a computational cost that is independent of code distance. It can be straightforwardly generalized to other topological codes. This algorithm yields routing schedules for the examples discussed in the main text, and in Apps.~\ref{app:HeavyEx} and \ref{app:HexEx}.

Since the interaction graph of the surface code SE circuits and the device graphs are both periodic, with invariant local structures as the code-distance increases, the core idea behind our approach is to embed a unit cell of the interaction graph onto a unit cell of the device graph, and subsequently search for the routing schedule for this unit. This avoids the need to search across the entire interaction- and device-graph lattices, which increase in size with code distance. The resulting schedule can then be tiled in a repeated manner up to the desired code-distance, with tiles at the code boundary appropriately truncated to ensure that lower-weight stabilizers at the boundaries are measured correctly.

\begin{figure}
    \centerline{\includegraphics[width=.5\textwidth]{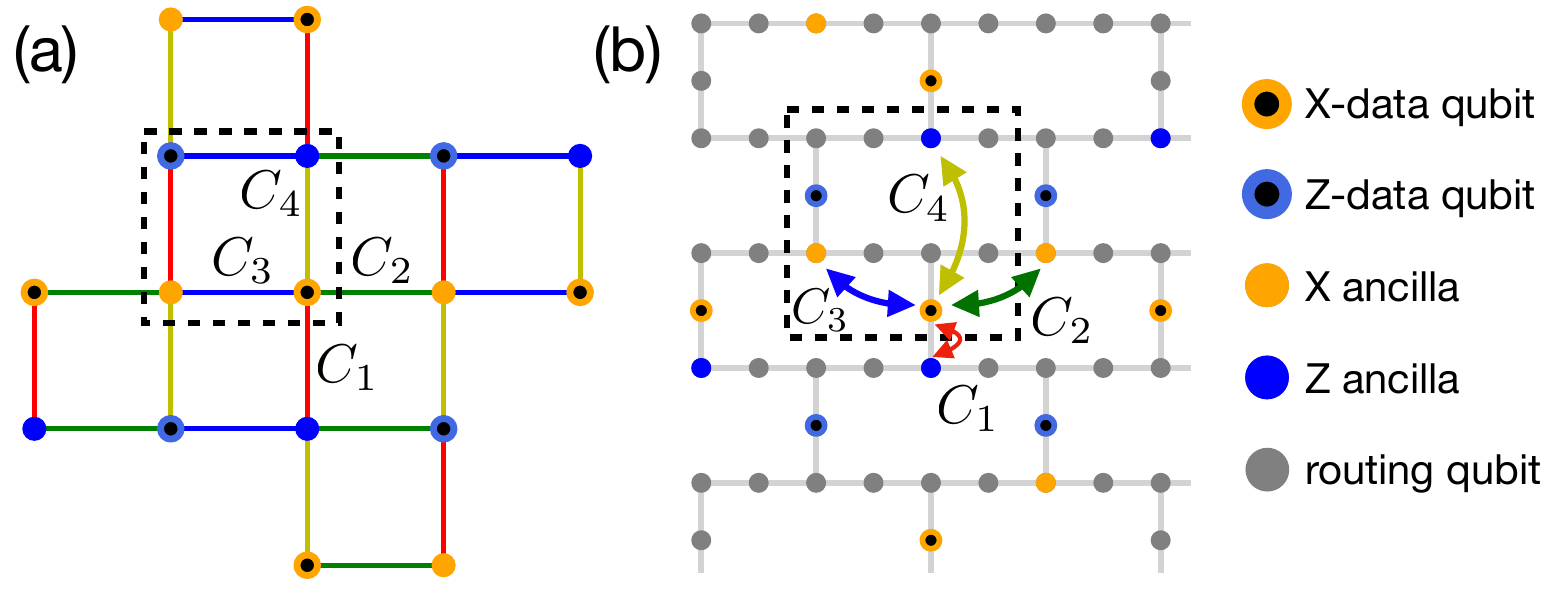}}
    \caption{The $d=3$ surface-code example. The interaction graph is given in (a), and the heavy-hexagonal device graph in (b). Unit cells are marked by the dotted boxes. The four distinct types of qubits of the surface code are labelled on the right. Also shown are the four CNOTs experienced by an $X$-data qubit; gates $C_1, C_2, C_3, C_4$ correspond to red, green, blue, yellow colored lines, respectively, in the interaction graph (a), while the correspondingly colored arrows in (b) mark the pairs of computational qubits in the device involved in these operations.
    }
    \label{fig:hhex_embedding_algo}
\end{figure}

We begin by identifying the units cells of the interaction and device graphs. A unit cell in the interaction graph of the surface code is illustrated in Fig.~\ref{fig:hhex_embedding_algo}(a). The computational qubits of the surface code can be classified into four species: the $Z$ ancillas, the $X$ ancillas, the $X$-data qubits (defined as a data qubit connected horizontally to two $X$ ancillas in that layer of the the interaction graph), and the $Z$-data qubits (similarly defined as a data qubit connected horizontally to two $Z$ ancillas). Each species of qubits can be distinguished by the ordering in which they interact with other qubits throughout the computation, which can be observed from the interaction graph and the ordering of CNOT gates of the surface code in Fig.~\ref{fig:SC}(a) and (b). For instance, while data qubits interact with $X$ and $Z$ ancillas in an alternating fashion, $X$- ($Z$-) data qubits interact with $Z$ ($X$) ancillas first. In the device graph, a unit cell that satisfies the same translational symmetry as the surface code is identified, marked by the dotted box in Fig.~\ref{fig:hhex_embedding_algo}(b). 

Next, a parametrization of the set of all routing schedules is required. Our approach consists of two parts, parametrizing the (1) initial embeddings and (2) swap layers separately:
\begin{enumerate}
\item \underline{Parametrization of initial embedding}. The four qubit species are assigned onto the unit cell of the device graph. If there are $N$ lattice sites on the unit cell, there are a total of $^N\!P_4$ possible assignments. For the heavy-hexagonal lattice, $N = 10$, giving $5040$ possible initial assignments.
\item \underline{Parametrization of swap layer}. A swap layer can be defined by specifying the targets of the swap gates applied to each of the four species of qubits. For the heavy-hexagonal lattice, each qubit can either be idling, or swapped with one of its two or three neighbors, so there are $\leq 4^4$ possible swap layers, which can be further reduced by enforcing the constraint on type-2 SWAPs. For convenience, we arrange the heavy-hexagonal and hexagonal lattices in a grid-like manner on a 2D plane as in Figs.~\ref{fig:SC}(b) and \ref{fig:hhex_embedding_algo}(b); this allows the action of each swap gate to be associated with a permutation with a qubit to its up (u), down (d), left (l), or right (r) directions, or idling (i).
\end{enumerate}
In summary, the number of possible $L$-swap-layer routing schedules for a QEC code with $n$ species of qubits embedded onto a device graph of maximum degree $\Delta$ with $N$ qubits per unit cell is upper-bounded by $(^N\!P_n \times n^{\Delta + 1})^L$.

The general search algorithm of App.~\ref{app:Routing} can then be adapted to obtain EPP routing schedules in a distance-independent manner:
\begin{enumerate}

    \item Initialize an embedding of the four species of qubits in one of the $^N\!P_4$ possible assignments in the unit cell in the device graph. For each of the four qubits, track the targets of the embedded CNOT gates (possibly located in adjacent unit cells), which defines the distance/objective function Eq.~(\ref{eq:objective_func]}). For example, for a particular $X$-data qubit, the targets of the effective gates are represented as curved colored lines labelled $C_1, C_2, C_3$, and $C_4$ in Fig.~\ref{fig:hhex_embedding_algo}(b), each corresponding to its abstract counterparts in the interaction graph of Fig.~\ref{fig:hhex_embedding_algo}(a).
    
    \item Apply a swap layer (consisting of one of the directions $u,d,l,r,i$ on each species of qubit) that minimizes the objective function Eq.~(\ref{eq:objective_func]}) for the CNOT layer $C_1$ in a greedy manner. The embedding is updated.
    \item Check for any executable gates in $C_1$, i.e., if the targets of the effective gates in layer $C_1$ are next to one another in the device graph in the current embedding.
    \item Repeat steps 2 and 3 until all gates in $C_1$ are executed, after which steps 2 and 3 are repeated sequentially with $C_2,C_3$, and $C_4$.
\end{enumerate}
\noindent If successful, the result is a set of swap layers $S_1, S_2, ...$ interspersed between the CNOT layers $C_1,C_2,C_3$, and $C_4$ that implements an EPP routing schedule.

%%%%%%%%%
\subsection{Effective noise model} 
\label{sec:approx_noise_model_surf}

Here, we describe how to derive an effective noise model on the embedded circuit for our surface-code examples. Apart from explaining how we arrive at the scaled error probability parameter $\peff$ used in our plots in the main text, this serves also as an explicit illustration of Lemma \ref{lemma} and Theorem \ref{theorem} for the specific example of Pauli noise.

The physical noise model used in our simulations is that of depolarizing noise: A single-qubit gate sees a Pauli $X$, $Y$, or $Z$ error with probability $p/3$ each; a two-qubit gate sees a two-qubit Pauli error with probability $p/15$ each. Idling qubits are assumed to be noiseless. Our discussion here, however, applies to a general Pauli noise model, where the single- and two-qubit gates see the following noise channels,
\begin{align}
\cE_{q}(\,\cdot\,)&\equiv \sum_\alpha p_\alpha\sigma_\alpha(\,\cdot\,)\sigma_\alpha\\
\cE_{q_1,q_2}(\,\cdot\,)&\equiv \sum_{\alpha\beta}p_{\alpha\beta}\sigma_\alpha\otimes\sigma_\beta(\,\cdot\,)\sigma_\alpha\otimes \sigma_\beta,\nonumber
\end{align}
where $\sigma_\alpha=I,X,Y,Z$, for $\alpha = 0,1,2,3$. Note the trace-preservation requirements: $p_0=1-\sum_{\alpha\neq 0}p_\alpha$ and $p_{00}=1-\sum_{\alpha\beta\neq 00}p_{\alpha\beta}$. The depolarizing channel has $p_{\alpha\neq 0}=p/3$ and $p_{\alpha\beta\neq 00}=p/15$, so that $p_0=1-p=p_{00}$. The probability of an error in this case, whether it is a single- or two-qubit channel, is then just $p$.

The situation of Pauli noise gives particularly simple noise behavior, with faulty swap gates effectively introducing Pauli errors into the embedded circuit in an independent manner. To see this, we consider the following situation of two computational qubits $c_1$ and $c_2$ sequentially coupled to a routing qubit $r$ by type-1 SWAPs:
\begin{center}
\begin{quantikz}[row sep=0cm, column sep =0.5cm]
\lstick{$r$}& \gate[swap]{}  & \qw&\qw\rstick{$c_1$} \\ 
\lstick{$c_1$}& &\gate[swap]{}&\qw\rstick{$c_2$}\\
\lstick{$c_2$}&\qw &&\qw\rstick{$r$}
\end{quantikz}
\end{center}
If both SWAPs are faulty, each subjected to the two-qubit noise channel $\cE_{q_1,q_1}$, straightforward calculation gives a product channel $\cE'_{c_1}\otimes \cE'_{c_2}$, where $\cE'_{q}$ is the single-qubit channel, 
\begin{equation}\label{eq:indepChannel}
\cE'_{q}(\,\cdot\,)\equiv\sum_{\alpha}{\Bigl(\sum_\beta p_{\alpha\beta}\Bigr)}\sigma_\alpha(\,\cdot\,)\sigma_\alpha
\end{equation}
as the noise on the output computational qubits (tracing away the routing qubit $r$). Faulty type-1 SWAPs thus only introduce Pauli errors into the embedded circuit in an independent manner. In particular, the errors on a computational qubit due to its participation in a faulty type-1 SWAP can equivalently be thought of as arising from a fault on that computational qubit location in the embedded circuit, with an added probability $p_{\textrm{1-SWAP},\alpha}\equiv\sum_\beta p_{\alpha\beta}$ of $\sigma_\alpha$ error, in accordance with Eq.~\eqref{eq:indepChannel}. For depolarizing noise on the swap gate, $p_{\textrm{1-SWAP},\alpha}=4p/15$, and the total additional probability of error is $\sum_{\alpha=1,2,3} p_{\textrm{1-SWAP},\alpha}=4p/5$.

We thus have a more concrete picture in this Pauli-noise situation, compared with the general scenario of Theorem \ref{theorem}, of how the errors due to swap gates can be attributed to faults in the computational qubits. Consider a swap sequence between two consecutive computational layers. Every time a computational qubit participates in a type-1 SWAP in that swap sequence, it acquires an additional $p_{\textrm{1-SWAP},\alpha}$ for the probability of a $\sigma_\alpha$ error. The errors from faulty type-2 SWAPs are lumped together with the associated (two-qubit) computational gate at the end of the swap sequence. Note that, since Pauli errors only either commute or anti-commute, the order in which the Pauli errors occur during the swap sequence do not matter; we need only keep track of the accumulated probability of each kind of Pauli error. The exact effective noise model thus depends on the details of the full routing schedule, and can be very complicated in general. 

\begin{table}
    \begin{tabular}{c|c}
         Pauli error& probability \\
         \hline
         no error $\id\otimes\id$ & $1-p-p_{\textrm{swap}}{\left[\frac{4}{5}(n_{1,1}+ n_{1,2})+n_2\right]}$ \\
         $\sigma_i\otimes \id$ & $\frac{1}{15}[p+p_{\textrm{swap}}(4 n_{1,1}+n_2)]$,\quad $i=1,2,3$ \\
         $\id\otimes \sigma_j$ & $\frac{1}{15}[p+p_{\textrm{swap}}(4n_{1,2}+n_2)]$, \quad $j=1,2,3$ \\
         $\sigma_i\otimes \sigma_j$ & $\frac{1}{15}(p+p_{\textrm{swap}} n_2)$,\quad $i,j=1,2,3$
    \end{tabular}
    \caption{\label{tab:effChannel} The effective noise channel on a CNOT in the embedded circuit. The probability expressions are accurate to linear order in $p$.}
\end{table}

For our surface-code examples under depolarizing noise, however, we can derive a simpler, albeit, approximate, description. We first observe that, for the SE circuits, all multi-qubit computational gates are two-qubit CNOTs. The CNOTs in each SE circuit occur in 4 computational layers, and every (abstract or computational) qubit, apart from the boundary ones, participates in exactly one CNOT in each layer. Between each pair of CNOT layers, we insert swap sequences to route together any pair of unconnected computational qubits that participate in the same CNOT. Type-2 SWAPs are hence allowed only between each such pair of computational qubits. For a particular long-distance CNOT, suppose the swap sequence needed to move a pair of computational qubits $q_1, q_2$ together has $n_{1,1}$ and $n_{1,2}$ type-1 SWAPs on $q_1$ and $q_2$, respectively, and $n_2$ type-2 SWAPs. Then, the effective noise of that CNOT in the embedded circuit can be described as given in Table \ref{tab:effChannel}. Here, we have taken into account the added noise from the swap sequence, as well as the original noise on the computational CNOT itself. The swap gates are assumed to be subjected to depolarizing noise with parameter $p_\textrm{swap}[\sim O(p)]$, anticipating that the SWAPs are possibly not native, but are made up of $O(1)$ primitive gates.  

From Table \ref{tab:effChannel}, we see that the effective noise channel for each CNOT (or, more generally, a two-qubit gate) in the embedded circuit is an asymmetric Pauli channel, with coefficients for the different Pauli errors that depend on the details of the routing schedule; the asymmetry arises from type-1 SWAPs. Nevertheless, one might hope for an approximate summary of the behavior by a single effective noise parameter $\peff'$ (where the prime on $\peff'$ distinguishes it from the best-fit $\peff$ used in the main text and Apps~\ref{app:HexEx}) using spacetime averages of the routing schedule parameters and ignoring the asymmetry. Specifically, we regard the probability of no error (the $\id\otimes\id$ term in Table \ref{tab:effChannel}) as the $p_0\equiv 1-\peff'$ parameter for a depolarizing channel, replacing $n_2$ and the sum of $n_{1,1}$ and $n_{1,2}$ by their averages (indicated by a bar) over all qubits and layers for the entire routing schedule:
\begin{equation}\label{eq:peffprime}
\peff'\equiv p+p_\textrm{swap}{\left[\tfrac{4}{5}\overline{n_1}+\overline{n_2}\right]},
\end{equation}
where $\overline{n_1}\equiv \overline{n_{1,1}}+\overline{n_{1,2}}$ counts the average number of type-1 SWAPs experienced by a computational qubit. While the single parameter $\peff'$ cannot capture the asymmetry present in the actual effective noise channel, for our examples of heavy-hexagonal and hexagonal lattices, it nevertheless provides a good summary---as confirmed by our simulations---of the noise deterioration introduced by our routing schedule, as we explain below.

In our simulations, each SWAP is composed from 3 CNOTs, with a two-qubit depolarizing noise (with strength $p$) on each CNOT. The noise channel associated with each SWAP can be easily shown to still be depolarizing noise, to linear order in $p$, but with the noise strength $p_\textrm{swap}=3p$. Furthermore, for the hexagonal-lattice example, where the optimal swap schedule employs only a single type-2 SWAP, that SWAP is always adjacent to a computational CNOT. Under the 3-CNOT decomposition of the type-2 SWAP, this gives 4 CNOTs in a row such that a pair of CNOTs always cancels out, leaving a remaining pair of CNOTs to implement both the type-2 SWAP and the computational CNOT (this cancellation argument applies generally for all routing schedules, when type-2 SWAPs are decomposed in terms of CNOTs). Allocating the noise from the computational gate to one of the two CNOTs, and the noise for the swap gate to the other, we can set $p_\textrm{swap}=p$ in this case. 

Now, for the hexagonal case, we have $\overline{n_1}=0$ and $\overline{n_2}=1/4$, with the division by $4$ accounting for the average over the 4 CNOT computational layers. For the heavy-hexagonal case, we have $\overline{n_1}=3.5/4$ and $\overline{n_2}=0$, with $3.5$ being the average number of type-1 SWAPs per SE round (with its 4 computational CNOTs) as stated in the main text (and see App.~\ref{app:HeavyEx}). Then, we have, for our two examples, the following $\peff'$ values:
\begin{align} \label{eq:p_eff_values}
    \textrm{heavy-hex.:}\quad \peff'&=p+(3p){\left[\tfrac{4}{5}\times 3.5/4+0\right]}=3.1p\nonumber\\
    \textrm{hex.:}\quad \peff'&=p+p{\left[0+1/4\right]}=1.25p.
\end{align}
These come close to the best-fit numerical $\peff$ values given in the main text, and Eq.~\eqref{eq:peffprime} can be used as a first estimate of how the noise strength gets rescaled by the EPP routing schedule, and the corresponding threshold deterioration.

\subsection{Syndrome decoding for embedded circuits} \label{sec:mwpm_correction}
Here, we describe how the error mechanisms of embedded circuits can be incorporated into the surface code syndrome decoding. As discussed in the main text, the swap gates in the physical circuit can introduce correlated errors in addition to existing errors in the abstract circuit. These correlated errors, while not altering the fault-tolerance properties of the code, can affect the accuracy of the decoder. This is particularly so since different decoders for surface codes compete largely in the space where there are more errors than  guaranteed correctable by the code distance, and hence beyond the fault-tolerance (and EPP) considerations. One expects to be able to reduce the logical error rates by making use of error-correlation information in the decoder. We explain below how to incorporate that into the matching graph of a minimum-weight perfect-matching (MWPM) surface code decoder; the ideas can be extended to other codes decoded in a similar manner.

We consider a distance-$d$ surface code that corrects up to $t = \left\lfloor d/2 \right\rfloor$ errors. We assume it has been embedded into the hardware using an EPP routing schedule. The decoder sees only the embedded circuit, not the physical circuit; in particular, the matching graph---which carries the information about where single errors can occur in the circuit and with what weight (probability), and is used by the MWPM algorithm to decide on the allowed matchings and corresponding weights---contains only nodes and edges that represent data and ancilla qubits of the embedded circuit. The matching graph is constructed as follows:
\begin{enumerate}
    \item Begin with the standard matching graph \cite{dennis2002topological,fowler2012surface} for the abstract circuit, to capture the basic error mechanisms present even without routing.
    \item Types-1 and 2 SWAPs that fail independently mimic existing error mechanisms of the abstract circuit, and thus already exist as edges in the matching graph. Incorporate these additional sources of errors by locally updating their corresponding edge weights, which yields noise strength that are additive to the leading order.
    \item Type-1 SWAPs that connect multiple computational qubits to the same routing qubit can generally lead to correlated errors. These can be incorporated by decomposing them in terms of their constituent single-qubit error mechanisms. Explicitly, all correlated error mechanisms of weight $s \leq t$ are identified, which arise whenever the same routing qubit is involved in $s$ type-1 SWAPs. For each of them, trace out the routing qubits to yield a spatio-temporally correlated error mechanism involving $s$ locations. The noise strength of this error mechanism is simultaneously added to its constituent single-qubit error mechanisms, each of which already exists as an edge in the matching graph. 
\end{enumerate}
This procedure yields a matching graph with the same connectivity as that of the surface code, with updated edge weights, which can be used for syndrome decoding via the MWPM algorithm.

We note that the correlated errors can be directly included as hyper-edges (i.e., edges joining more than 2 nodes) in the matching graph, instead of decomposing them in terms of independent error mechanisms. This improves the performance of decoding and thus lowers logical error probabilities and threshold requirements, at the expense of transforming the matching graph into a hypergraph that requires a hypergraph matching algorithm \cite{higgott2023improved,chen2022calibrated}. Specifically, correlated errors involving $s$ qubits or fault locations appear as order-$2s$ hyper-edges (i.e., edges joining $2s$ nodes) involving existing nodes in the matching graph that can always be formed by the union of nodes involved in $s$ existing edges (each corresponding to an existing single-qubit error mechanism that a faulty type-1 SWAP mimics). In our simulations, we adopt the simpler approach of working only with MWPM and a matching graph consisting of graph-like edges (i.e., edges only join 2 nodes).

\subsection{Further details of the heavy-hexagonal example} \label{app:HeavyEx}

\begin{figure}
    \centerline{\includegraphics[width=.5\textwidth]{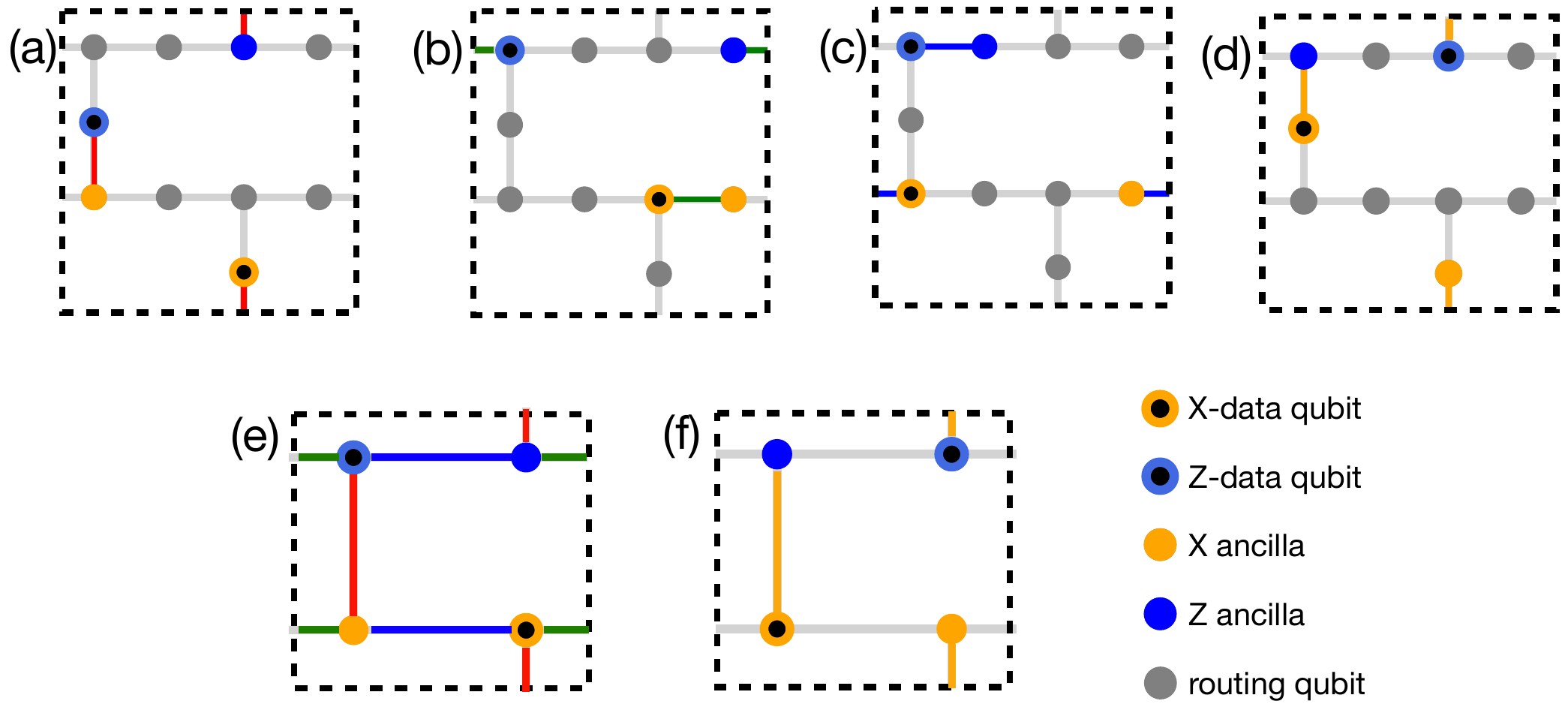}}
    \caption{Intermediate embeddings for the heavy-hexagonal [(a)--(d)] and hexagonal [(e) and (f)] lattice EPP schedules. Red, green, blue, and yellow colored edges denote computational operations corresponding to the CNOT layers $C_1, C_2, C_3$, and $C_4$, respectively.
    }
    \label{fig:schedule_schematic}
\end{figure}

Here, we describe the routing schedule used in our heavy-hexagonal lattice example, as found by the greedy distance-minimizing routing algorithm described in App.~\ref{app:Routing}. The precise moves of the EPP routing schedule are given in Table.~\ref{table:hhex_schedule}, with intermediate embeddings during CNOT layers illustrated in Fig.~\ref{fig:schedule_schematic} (a)-(d).
It consists of 5 SWAP layers interspersed between the 4 computational CNOT layers of one round of syndrome extraction. At the end of one full round, the computational qubits are displaced one unit cell diagonally from their initial positions, which are reversed in the next round by reversing the ordering and directions of the swap moves.

\begin{table}[!h]
\begin{center}
\begin{tabular}{|l|c|c|c|c|c|c|c|c|c|}
\hline
             & $C_1$ & $S_1$ & $C_2$ & $S_2$ & $S_3$ & $C_3$ & $S_4$ & $S_5$ & $C_4$ \\ \hline
$X$ data     &$d$ &$u$&$r$&$l$&$l$ &$l$&$u$&$i$ &$u$ \\ \hline
$Z$ data     &$d$ &$u$&$l$&$i$&$i$ &$r$&$l$&$l$ &$u$ \\ \hline
$X$ ancilla  &$u$ &$l$&$l$&$i$&$i$ &$l$&$l$&$d$ &$d$ \\ \hline
$Z$ ancilla  &$u$ &$r$&$r$&$l$&$l$ &$r$&$i$&$l$ &$d$ \\ \hline
\end{tabular}
\caption{\label{table:hhex_schedule} 5-SWAP-layer routing schedule for the heavy-hexagonal lattice example. The $C_i$ columns are the computational CNOT layers while the $S_i$ columns are the swap layers. For the $S_i$ columns, the symbols $u$, $d$, $l$, $r$, and $i$ indicate swaps of the qubit (row) with its neighboring qubit to move up, down, left, right, or idle respectively. For the $C_i$ columns, the same symbols denote the direction of the qubit interacting with it through a CNOT gate (i.e. the orientations of the colored edges in Fig.~\ref{fig:schedule_schematic}).}
\end{center}
\end{table}

Each data or ancillary qubit experiences, on average, $\overline{n_1} = 3.5$ and $\overline{n_2} = 0$ type-1 and type-2 SWAPs respectively. Decomposing each swap gate as 3 CNOT gates, we find that it can be executed within 19 timesteps. Notably, certain routing qubits in this schedule are involved in type-1 SWAPs with computational qubits that are not connected in the interaction graph, giving rise to the possibility of spatio-temporally correlated errors as mentioned in the main text.

\subsection{Further details of the hexagonal example}
\label{app:HexEx}

\begin{table}
\begin{tabular}{|l|c|c|c|c|c|}
\hline
  & $C_1$ & $C_2$ & $C_3$ & $S_1$ & $C_4$ \\ \hline
$X$ data     & $d$ & $r$ & $l$ & $r$ & $u$    \\ \hline
$Z$ data     & $d$ & $l$ & $r$ & $r$ & $u$   \\ \hline
$X$ ancilla  & $u$ & $l$ & $r$ & $r$ & $d$    \\ \hline
$Z$ ancilla  & $u$ & $r$ & $l$ & $r$ & $d$    \\ \hline
\end{tabular}
\caption{\label{table:hex_schedule} 1-SWAP-layer routing schedule for the hexagonal lattice. Notations follow that of Table.~\ref{table:hhex_schedule}.}
\end{table}

\begin{figure}
    \centerline{\includegraphics[width=.5\textwidth]{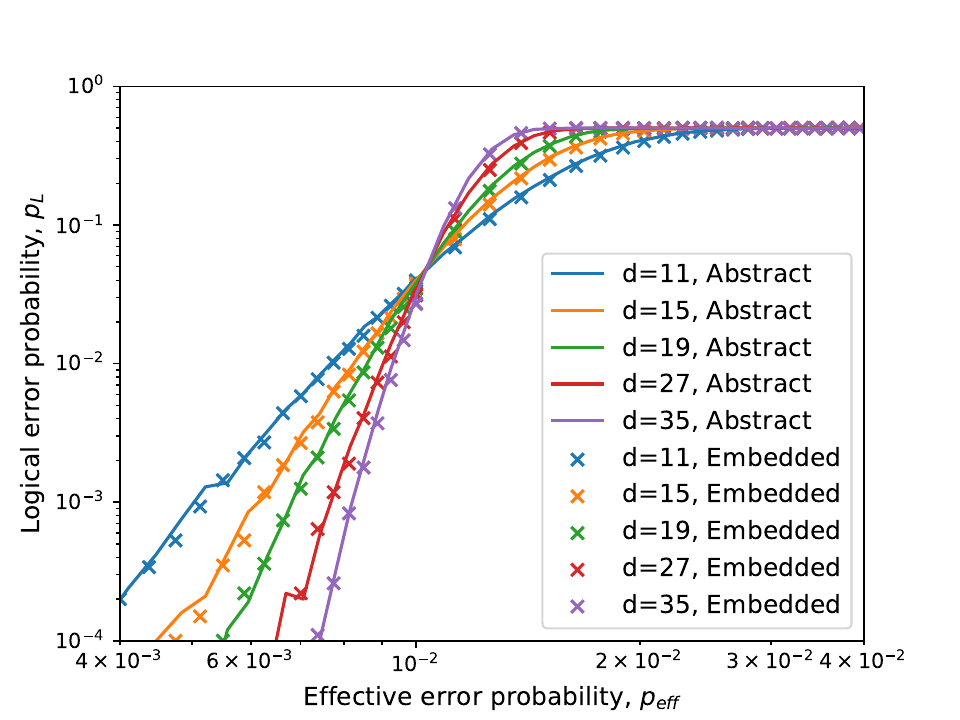}}
    \caption{The logical error probability $\pL$ versus the effective error probability $\peff$ for the example of the surface code embedded onto the hexagonal lattice under depolarizing noise with error probability $p$. For the benchmark noisy abstract circuit, $\peff=p$; for the physical (embedded) circuit, $\peff=1.25p$ obtained by fitting both curves for $d=35$ below the threshold value.
    }
    \label{fig:hexacancel_LER}
\end{figure}

Here, we describe the routing schedule used in our hexagonal lattice example, as found by the greedy distance-minimizing routing algorithm described in App.~\ref{app:Routing}.
The precise moves of the EPP routing schedule are given in Table.~\ref{table:hex_schedule}, with intermediate embeddings during CNOT layers illustrated in Fig.~\ref{fig:schedule_schematic} (e) and (f). It consists of a single SWAP layer between the final pair of interaction layers, consisting exclusively of type-2 swap gates. There are also no routing qubits involved -- every physical qubit is mapped to a data or ancilla qubit of the surface code. At the end of one full round, the computational qubits are displaced one unit cell horizontally from their initial positions, which are reversed in the next round by reversing the ordering and directions of the swap moves. 

Numerical simulations of the logical error probabilities are given in Fig.~\ref{fig:hexacancel_LER}. Each data or ancillary qubit experiences $\overline{n_1} = 1$ swap gates on average. Decomposing each swap gate as 3 CNOT gates and simplifying the circuit via cancellations of two CNOT gates as described in App.~\ref{sec:approx_noise_model_surf}, we find that it can be executed within 5 timesteps. Notably, this schedule saturates the naive lower bound of $1$ SWAP layer, necessary when embedding a degree-$4$ interaction graph onto a degree-$3$ device graph, and is therefore optimal in the number of swap layers. In addition,  since no routing qubits are involved, it is also optimal in terms of the number of physical qubits required to embed the surface code onto the hexagonal lattice.

\end{document}